\newcommand{\indep}{\perp \!\!\! \perp}
\newcommand{\customlabel}[2]{%
\protected@write \@auxout {}{\string \newlabel {#2}{{#1}{\thepage}{#1}{#2}{}}}
\hypertarget{#2}{}
}
\def\arraystretch{0.75} 
\newtheoremstyle{break}
  {\topsep}{\topsep}%
  {\itshape}{}%
  {\bfseries}{}%
  {\newline}{}%
\theoremstyle{break}
\theoremstyle{plain} 
\providecommand{\customplaingenericname}{}
\newcommand{\newtheoremplain}[2]{%
  \newenvironment{#1}[1]
  {%
   \ifdefined\crefalias\crefalias{innercustomplaingeneric}{#2}\fi
   \renewcommand\customplaingenericname{#2}%
   \renewcommand\theinnercustomplaingeneric{##1}%
   \innercustomplaingeneric
  }
  {\endinnercustomplaingeneric}%
  \ifdefined\crefname\crefname{#2}{#2}{#2s}\fi
}
\theoremstyle{definition}
\providecommand{\customdefinitiongenericname}{}
\newcommand{\newtheoremdefinition}[2]{%
  \newenvironment{#1}[1]
  {%
   \ifdefined\crefalias\crefalias{innercustomdefinitiongeneric}{#2}\fi
   \renewcommand\customdefinitiongenericname{#2}%
   \renewcommand\theinnercustomdefinitiongeneric{##1}%
   \innercustomdefinitiongeneric
  }
  {\endinnercustomdefinitiongeneric}%
  \ifdefined\crefname\crefname{#2}{#2}{#2s}\fi
}
\theoremstyle{remark}
\providecommand{\customremarkgenericname}{}
\newcommand{\newtheoremremark}[2]{%
  \newenvironment{#1}[1]
  {%
   \ifdefined\crefalias\crefalias{innercustomremarkgeneric}{#2}\fi
   \renewcommand\customremarkgenericname{#2}%
   \renewcommand\theinnercustomremarkgeneric{##1}%
   \innercustomremarkgeneric
  }
  {\endinnercustomremarkgeneric}%
  \ifdefined\crefname\crefname{#2}{#2}{#2s}\fi
}
\DeclareMathOperator*{\argmin}{arg\,min}
\DeclareMathOperator*{\supp}{supp}
\newcommand{\biggg}{\bBigg@{4}}
\newcommand{\Biggg}{\bBigg@{5}}
\newcommand{\ubar}[1]{\text{\b{$#1$}}} 
\newcolumntype{L}[1]{>{\raggedright\let\newline\\\arraybackslash\hspace{0pt}}p{#1}}
\newcolumntype{C}[1]{>{\centering\let\newline\\\arraybackslash\hspace{0pt}}m{#1}}
\newcolumntype{R}[1]{>{\raggedleft\let\newline\\\arraybackslash\hspace{0pt}}m{#1}}
\definecolor{custom_blue}{rgb}{0.0627, 0.1765, 0.3686}
\definecolor{custom_darker_blue}{rgb}{0.0078, 0.0667, 0.1373}
\definecolor{custom_light_blue}{rgb}{0.18, 0.49, 0.72}
\definecolor{highlight_grey}{rgb}{0.85, 0.85, 0.85} 
\definecolor{highlight_blue}{rgb}{0.7510, 0.9216, 1}
\newcommand{\mycom}[1]{}
\begin{document}

\title{Inferring Treatment Effects in Large Panels by Uncovering Latent Similarities}

\author{Ben Deaner\thanks{Department of Economics, University College London, and CeMMAP. The previous (incomplete) version of this paper was prepared for the ESWC submission on January 31, 2025, and is also available \href{https://www.ucl.ac.uk/~uctpzel/inferring_TE.pdf}{here}. We thank Timothy B. Armstrong, Andrew Chesher, Timothy Christensen, Aureo de Paula, Kirill Ponomarev, and Liyang Sun for their valuable comments and suggestions. We have benefited from helpful discussions with participants at seminars and conferences including UCL and CeMMAP Ph.D./Post-doc Econometrics Research Day.} \and Chen-Wei Hsiang\footnotemark[1] \and Andrei Zeleneev\footnotemark[1]}

\date{\small This version: \today\\ First Circulated: January 31, 2025}
\maketitle

\begin{abstract}
The presence of unobserved confounders is one of the main challenges in identifying treatment effects. In this paper, we propose a new approach to causal inference using panel data with large $N$ and $T$. Our approach imputes the untreated potential outcomes for treated units using the outcomes for untreated individuals with similar values of the latent confounders. In order to find units with similar latent characteristics, we utilize long pre-treatment histories of the  outcomes. Our analysis is based on a nonparametric, nonlinear, and nonseparable factor model for untreated potential outcomes and treatments. The model satisfies minimal smoothness requirements. We impute both missing counterfactual outcomes and propensity scores using kernel smoothing based on the constructed measure of latent similarity between units, and demonstrate that our estimates can achieve the optimal nonparametric rate of convergence up to log terms. Using these estimates, we construct a doubly robust estimator of the period-specifc average treatment effect on the treated (ATT), and provide conditions, under which this estimator is $\sqrt{N}$-consistent, and asymptotically normal and unbiased. Our simulation study demonstrates that our method provides accurate inference for a wide range of data generating processes.


\end{abstract}

\newpage

\section{Introduction} \label{sec:1}
The potential presence of unobserved confounding factors presents a challenging obstacle for casual inference. However, the availability of rich panel data can help solve this problem. Panel data contain multiple observations for each individual  and thus it provides a possible means of controlling for time-invariant confounders (fixed effects). Moreover, panel data can enable the imputation of individual counterfactual outcomes. Methods for causal inference in panels have become dominant in empirical work, and their development is an active and rapidly growing area of research (see, e.g., \citealp{arkhangelsky2024causal}).

Many popular methods for causal inference using panel data assume that untreated potential outcomes have a linear/affine structure. For example, difference-in-differences (DiD) and two-way fixed effects (TWFE) methods rely on parallel trend assumptions, whereas synthetic control (SC) and most matrix completion methods require that untreated potential outcomes obey a linear factor model. The assumption of a linear model can simplify estimation and inference, and may be necessary given limited data availability. However, the assumption of linearity may be overly restrictive.

This paper presents a new method for estimation of, and inference on, the causal effects of a binary treatment $w_{i,t}$ in large panels. We assume that the untreated potential outcome of unit $i$ in period $t$ follows the possibly nonlinear and non-separable factor model below 
\begin{align}
    \label{eq:Y_0 model}
    Y_{i,t}(0) = \mu^{(0)}(\alpha_i, \lambda_t) + u_{i,t},
\end{align}
where $\alpha_i$ and $\lambda_t$ are unobserved unit and time effects, $\mu^{(0)}(\cdot, \cdot)$ is an \emph{unknown} function, and $u_{i,t}$ is an (exogenous) residual.\footnote{It is also straightforward to allow for a linear covariates adjustment in \eqref{eq:Y_0 model}. We abstract from this extension for clarity of exposition.} We assume that conditional on the latent factors, untreated potential outcomes are independent of treatment. Importantly, we do not assume that $\mu^{(0)}(\cdot, \cdot)$ has any particular parametric form. Thus, the model~\eqref{eq:Y_0 model} is substantially more general than the commonly used linear formulations $\mu^{(0)}(\alpha_i, \lambda_t) = \alpha_i + \lambda_t$ and $\mu^{(0)}(\alpha_i, \lambda_t) = \alpha_i' \lambda_t$ imposed by DiD/TWFE and SC methods, respectively.

While the generality of formulation~\eqref{eq:Y_0 model} is appealing, it obscures identification and estimation of treatment effects because $\alpha_i$ cannot be identified without strong additional restrictions are on the form of $\mu^{(0)}(\cdot, \cdot)$. Instead of trying to model $\mu^{(0)}(\cdot, \cdot)$ and to estimate $\alpha_i$ directly, we follow an approach first developed by  \citet{zhang2017estimating} to estimate the graphon  in the context of network data. We use a long history of pre-treated outcomes to construct a pseudo-distance $\hat{d}_{i,j}$ that is informative about the closeness of the latent characteristics of individuals $i$ and $j$. That is, the closeness of $\alpha_i$ and $\alpha_j$. This allows us to impute the \emph{values} of conditional mean potential outcomes and propensity scores (conditional on the latent factors) in a given post-treatment period $t$ \emph{for all units} by kernel smoothing based on $\hat{d}_{ij}$. This is despite the fact that $\alpha_i$ \emph{is neither observed nor identified}.

We construct estimates of $\mu^{(0)}_{i,t}$ and $p_{i,t}$ and establish their rates of convergence (uniformly over $i$) when both the number of units $N$ and the number of pre-treatment periods $T_0$ go to infinity. Notably, if $T_0$ goes to infinity sufficiently quickly, then our estimators achieve the optimal nonparametric rate of convergence in \cite{stone1980optimal} (up to a log term), as if we observed the latent $\alpha_i$.

Once the values of the conditional means and the propensity scores are imputed, we construct a doubly robust estimate of the average treatment effect on the treated (ATT) for a given post-treatment period $t$ (e.g., \citealp{robins1994estimation,chernozhukov2018double}). We propose a cross-fitting scheme which, together with double robustness, helps to ensure that our treatment effect estimates are $\sqrt{N}$-consistent, and asymptotically normal and unbiased.  Remarkably, under certain regularity conditions and if $T_0$ grows sufficiently fast, our estimator achieves the semiparametric efficiency bound as if $\alpha_i$ was observed.

We provide extensive simulation evidence of the efficacy of our methods. We show reliable confidence interval coverage over a range of DGPs.


\subsection*{Related Literature}

This paper contributes to the vast and rapidly growing literature on causal inference in panels. We refer the reader to \citet{arkhangelsky2024causal} for an excellent recent overview of this field. DiD and TWFE are very widely used still growing in popularity in applied work, perhaps due to their practicality and perceived transparency (\citealp{goldsmith2024tracking}); see \citet{de2023two} and \citet{roth2023s} for extensive reviews of the DiD and TWFE literature.  A focus of the recent econometrics literature concerning these methods is the accomodation of heterogeneous treatment effects  (e.g., \citealp{de2020two,callaway2021difference,sun2021estimating,wooldridge2021two,borusyak2024revisiting}). However, the validity of DiD and TWFE methods crucially relies on the parallel trends assumption.  A number of recent papers re-evaluate the restrictiveness of parallel trends in DiD and TWFE   and in some cases propose relaxing the assumption at the expense of point identification (e.g., \citealp{manski2018right,rambachan2023more,ghanem2022selection}). In contrast, our framework relaxes the parallel trends assumption and allows for rich heterogeneity in treatment effects and selection mechanisms, but maintains point identification. However, in order to achieve this we require a sufficiently long pre-treatment history in contrast to these other methods.

Alternative popular approaches to causal inference in panels include SC and matrix completion methods; see \citet{abadie2021using} for a recent review of SC methods. Some recent developments in the rapidly growing SC and causal matrix completion literatures include, among others, \citet{arkhangelsky2021synthetic,cattaneo2021prediction,chernozhukov2021exact,chernozhukov2018t} and \citet{athey2021matrix,bai2021matrix,agarwal2023causal,abadie2024doubly}, respectively. Similarly to the interactive fixed effects panel literature (e.g., \citealp{pesaran2006estimation,bai2009panel}), SC and matrix completion methods generalize the TWFE framework by allowing the (untreated) potential outcomes to follow a factor model. Similarly to this paper, these methods impute the missing counterfactual outcomes by leveraging the factor structure of a long history of pre-treatment outcomes. However, to establish statistical guarantees and validity of inference, most methods assume a linear factor model, i.e, $\mu^{(0)} (\alpha_i,\lambda_t) = \alpha_i' \lambda_t$, whereas we allow for a general nonparametric and nonlinear factor model.
While some matrix completion methods consider  extensions to nonlinear factor models, these papers either lack inferential theory and/or impose strong smoothness requirements on $\mu^{(0)}(\cdot,\cdot)$, e.g., assume that $\mu^{(0)}(\cdot,\cdot)$ belongs to a H\"older class (e.g., \citealp{agarwal2020synthetic,fernandez2021low,athey2025identificationaveragetreatmenteffects}). In contrast, our imputation method (nearly) achieves the optimal nonparametric rate of convergence and allows us to provide a semiparametrically efficient estimator of the ATTs under a substantially weaker Lipschitz continuity assumption.

We measure latent similarity using a pseudo-distance suggested in \cite{zhang2017estimating}, allowing us to identify individuals with similar latent characteristics, despite the inherent non-identifiability of the latent factors $\alpha$. Various versions of this pseudo-distance have been utilized in the recent literature in numerous applications, including non-parametric graphon estimation \citep{zhang2017estimating,zeleneev2020identification,nowakowicz2024nonparametric}, controlling for unobservables using network data \citep{auerbach2022identification}, and estimation of nonlinear factor models \citep{feng2024optimal}. Other recent applications to estimation of treatment effects in network and panel models include \citet{wang2022linking}, \citet{hoshino2024estimating}, and \citet{athey2025identificationaveragetreatmenteffects} but, unlike this paper, none of these works provides inferential theory.
 

Perhaps most closely related to the present work are \cite{feng2024causal} and \cite{abadie2024doubly}. \cite{feng2024causal} studies causal inference in a cross-sectional setting. He uses a large number of auxiliary variables following a nonlinear factor model to control for unobservables. These auxiliary variables play effectively the same role as the pre-treatment history in our setting. Similarly to our work, \citet{feng2024causal} imputes the counterfactual means and the propensity scores and then uses the imputed values to construct doubly robust estimators and confidence intervals for causal estimands of interest. However, his imputation method is different from ours. Specifically, \citet{feng2024causal} combines the pseudo-distance of \citet{zhang2017estimating} with local principal component analysis (PCA) to estimate latent factors and loadings and then employs quasi-maximum likelihood using these estimates for imputation. On the other hand, our approach imputes the counterfactual means and propensity score directly without estimating the factors nor factor loadings. The simplicity of our approach results in both theoretical and practical advantages. First, we are able to establish consistency of our estimators and obtain the desired rates of convergence under weaker smoothness requirements. Second, we find that our approach also performs better than \citet{feng2024causal}'s method in small and moderate sample sizes typical for microeconometric applications.

\cite{abadie2024doubly} is another recent paper that provides a doubly robust approach to inference in latent factor models. While the standard matrix completion methods exclusively focus on imputing the counterfactual means, \citet{abadie2024doubly} suggest applying these methods to estimate the matrix of propensity scores as well. The latter imputation approach is fundamentally different than the one proposed in this paper. Specifically, \citet{abadie2024doubly} rely on denoising a large matrix of \emph{treatment assignments} which have an underlying low-rank structure, whereas our approach imputes the propensity scores based on the pre-treatment \emph{outcomes}. Another important difference is that \citet{abadie2024doubly} consider linear factor models, whereas we focus on and provide formal statistical guarantees for nonlinear factor models while imposing minimal smoothness requirements.

Our estimation approach is based on a doubly robust estimate of the ATT. As shown in \cite{stone1980optimal}, the optimal rate of convergence for nonparametric estimators is slow under weak smoothness assumptions, which can complicate inference. The use of doubly robust/Neyman orthogonal estimation (e.g., \citealp{robins1994estimation, robins1995semiparametric, hahn1998role, scharfstein1999adjusting,chernozhukov2018double,chernozhukov2022debiased}) can help ensure centered asymptotic normality for low-dimensional estimands in the presence of  nonparametric nuisance parameters. Some recent literature on panel data leverages double robustness to achieve valid inference under relatively weak conditions (e.g., \citealp{arkhangelsky2022doubly,arkhangelsky2024double,sant2020doubly}). 


\bigskip

The rest of the paper is organized as follows. Section \ref{sec:2} introduces the framework and describes our estimator.  Section \ref{sec:3} provides formal statistical guarantees. Section \ref{sec:4} presents numerical evidence. All auxiliary lemmas and proofs are provided in Appendix \ref{app:proof}.

\section{Notation and Proposed Method} \label{sec:2}
\subsection{Model and Notation}
Our sample consists of individuals indexed by $i=1,...,n$ and time periods indexed by $t=1,...,T$. For each individual $i$ and period $t$, we observe an outcome $Y_{i,t}$ and a binary indicator $w_{i,t}$. The indicator $w_{i,t}$ is equal to one if individual $i$ is treated at or prior to time $t$ and zero otherwise. We assume throughout that no individual in the population is treated prior to a period $T_0+1$ with $T_0<T$. We let $Y_{i,s:t}$ denote the vector of outcomes for individual $i$ from periods $s$ to $t$ inclusive and similarly for other variables.

Let $Y_{i,t}(0)$ denote the potential outcome of individual $i$ at time $t$ under a counterfactual in which the individual has not yet received treatment at period $t$. We assume that if $w_{i,t}=0$ then $Y_{i,t}=Y_{i,t}(0)$. This precludes the possibility that individuals anticipate future treatment and that this impacts their outcomes (e.g., \citealp{Abbring2003, borusyak2021revisiting, sun2021estimating}). We state this formally in Assumption \ref{asm:1} below.

\begin{assu}{1}[No Anticipation] \label{asm:1}
If $w_{i,t}=0$ then $Y_{i,t}=Y_{i,t}(0)$.
\end{assu}


Central to our analysis is the assumption of a non-linear factor model for the untreated potential outcomes. Let $\alpha_i$ be some latent and time-invariant characteristics of individual $i$, and let $\lambda_t$ be period-specific factors. $\alpha_i$ may capture say, unobserved demographic characteristics, individual preferences, or innate ability. $\lambda_t$ may capture e.g., unobserved macro-economic conditions, government policy, or environmental factors. We model the potential outcomes and realized treatments as follows.
\begin{align}
Y_{i,t}(0) &= \mu^{(0)}(\alpha_i, \lambda_t) + u_{i,t},&\mathbb{E}[u_{i,t}\vert\alpha_i, \lambda_t]=0.\label{outcome_model}\\
w_{i,t} &= p_t(\alpha_i) + \epsilon_{i,t},&\mathbb{E}[\epsilon_{i,t}\vert\alpha_i,\lambda_t
]=0.\label{treatment_model}
\end{align}
The residuals $u_{i,t}$ and $\epsilon_{i,t}$ are unobserved. The functions $\mu^{(0)}$ and $p_t$ are unknown and we do not assume that they have any particular functional form. Note that the model above implies that the mean of $Y_{i,t}(0)$ given the latent factors is time-invariant and does not depend on the individual $i$. However, the time-subscript on $p_t$ allows the conditional mean of $w_{i,t}$ to vary over time, which reflects the fact that this variable is increasing and that it is therefore non-stationary. 

Knowledge of the latent factors $\alpha_i$ and $\lambda_t$ is insufficient to identify causal quantities of interest. This is because the residual in the treatment model (\ref{treatment_model}) may be correlated with the residual in the outcome equation (\ref{outcome_model}). In particular, there may be confounding factors that are not included in $\alpha_i$ and $\lambda_t$ and which influence both the outcome and treatment status. In order to achieve identification we make the key assumption that the latent factors $\alpha_i$ and $\lambda_t$ together account for all confounding between the outcome and treatment. Formally, we assume that after controlling for these latent factors, there is no residual dependence between the untreated potential outcome and treatment.

\begin{assu}{2}[Latent Unconfoundedness] \label{asm:2}
$Y_{i,t}(0)\indep w_{i,t}|\alpha_i,\lambda_t$.
\end{assu}
The assumption of latent unconfoundedness is common in the literature (e.g. \citealp{abadie2024doubly, agarwal2021causal, arkhangelsky2022doubly, athey2021matrix, fernandez2021low}). Note that the assumption effectively requires that the latent factors are sufficiently rich.  

Exposure to macroeconomic conditions and other shared aggregate factors may induce dependence between the potential outcomes of different individuals. We assume that the time-specific factors $\lambda_t$ are sufficiently rich that after controlling for these factors and the individual-specific factors, the untreated potential outcomes of any two individuals are independent.
\begin{assu}{3}[Latent Independence] \label{asm:3}
For any individuals $i\neq j$, $Y_{i,t}(0)\indep Y_{j,t}(0)|\{\alpha_k\}_{k=1}^N,\lambda_t$.
\end{assu}

Under Assumption \ref{asm:2}, key counterfactual quantities of interest can be written in terms of the latent factors and functions $\mu^{(0)}$ and $p$. For notational convenience, let $p_{i,t}:=p_t(\alpha_i) $ and $\mu^{(0)}_{i,t}:=\mu^{(0)}(\alpha_i, \lambda_t)$. Under Assumption \ref{asm:2}, the average effect of treatment on the treated at time $t$, which is defined as $\mathbb{E}_t[Y_{i,t}-Y_{i,t}(0)|w_{i,t}=1]$, can be written in doubly-robust form as follows.
\[
\mathrm{ATT}_t=\frac{1}{P(w_{i,t}=1)}\mathbb{E}_t\bigg[Y_{i,t}w_{i,t}-\frac{(1-w_{i,t})Y_{i,t}p_{i,t}+(w_{i,t}-p_{i,t})\mu^{(0)}_{i,t}}{1-p_{i,t}}\bigg]
\]
The time subscript on the expectation above indicates that it is taken with respect to the period $t$-specific distribution of the observables (i.e., conditional on $\lambda_t$). The doubly-robust form of ATT above is proposed and used in panel data settings by \cite{sant2020doubly}. Additionally, the proposed method can also adopt other causal estimands with their respective doubly robust score functions (e.g., \citealp{arkhangelsky2022doubly}).

In this paper we provide new methods for estimating $p_{i,t}$ and $\mu^{(0)}_{i,t}$. Given these estimates, one can construct a doubly-robust estimate of $\mathrm{ATT}_t$ and perform inference on this object. Let $\hat{p}_{i,t}$ and $\hat{\mu}^{(0)}_{i,t}$ be estimates of $p_{i,t}$ and $\mu^{(0)}_{i,t}$ respectively. Then a corresponding doubly-robust estimate of $\mathrm{ATT}_t$ is given below, where $N$ is the sample size, and $N_{1,t}$ the number of individuals treated by time $t$.
\begin{equation}
\hat{\mathrm{ATT}}_t=\frac{1}{N_{1,t}}\sum_{i=1}^N\bigg(Y_{i,t}w_{i,t}-\frac{(1-w_{i,t})Y_{i,t}\hat{p}_{i,t}+(w_{i,t}-\hat{p}_{i,t})\hat{\mu}^{(0)}_{i,t}}{1-\hat{p}_{i,t}}\bigg) \label{ATT_estimator}
\end{equation}
\subsection{Proposed Method}

Our proposed estimation method uses pre-treatment outcomes to find untreated individuals whose latent factors $\alpha_i$ are similar to those of treated individuals. To motivate our approach, let us first suppose $\alpha_i$ were observed. Under Assumptions \ref{asm:1} and \ref{asm:2}, we have
\begin{equation*}
\mu^{(0)}(\alpha,\lambda_t)=\mathbb{E}_t[Y_{i,t}|\alpha_{i,t}=\alpha,w_{i,t}=0],
\hspace{40pt}
p_t(\alpha)=\mathbb{E}_t[w_{i,t}|\alpha_{i,t}=\alpha].
\end{equation*}

The objects on the right-hand sides of each equation above are regression functions. If $\alpha_{i,t}$ were observable, we could apply Nadaraya-Watson to non-parametrically estimate the functions $\mu^{(0)}(\cdot,\lambda_t)$ and $p_t(\cdot)$ and thus $p_{i,t}$ and $\mu^{(0)}_{i,t}$. To be precise, we could obtain the following estimates.
\begin{align*}
\hat{\mu}_{i,t}^{(0)} &= \frac{\sum\limits_{j; w_{j,t} = 0} K(\|\alpha_i-\alpha_j\|/h)Y_{j,t}}{\sum\limits_{j; w_{j,t} = 0}K(\|\alpha_i-\alpha_j\|/h)},
\hspace{40pt}
\hat{p}_{i,t} = \frac{\sum\limits_{j} K(\|\alpha_i-\alpha_j\|/h) w_{j,t}}{\sum\limits_{j}K(\|\alpha_i-\alpha_j\|/h)}
\end{align*}
The estimates above are infeasible because in practice, we do not directly observe $\alpha_i$ for any individual $i$. In order to obtain feasible estimates, we replace the infeasible distance $\|\alpha_i-\alpha_j\|$ in the expressions above, with a feasible pseudo-distance. To define this pseudo-distance, suppose that no individuals in the population are treated prior to some period $T_0+1$ and  let $\langle\cdot,\cdot \rangle$ be the Euclidean inner-product. We use the pseudo-distance defined below: 
\begin{align*}
\hat{d}_{i,j}= \frac{1}{T_0} \max\limits_{k \notin \{i, j\}} \vert \langle Y_{k,1:T_0}, Y_{i,1:T_0} - Y_{j,1:T_0} \rangle \vert
\end{align*}
A pseudo-distance of the form above is employed in \cite{zhang2017estimating}. To motivate the use of the pseudo-distance, suppose Assumption \ref{asm:3} holds. Then conditional on the individual latent factors, the inner-product in the pseudo-distance is an unbiased estimate of the object on the right-hand side below. 
\[
\mathbb{E}[\langle Y_{k,1:T_0}, Y_{i,1:T_0} - Y_{j,1:T_0} \rangle|\alpha_i,\alpha_j,\alpha_k]=\int \mu^{(0)}(\alpha, \lambda) \big( \mu^{(0)}(\alpha_i, \lambda) - \mu^{(0)}(\alpha_j, \lambda) \big) \dd \pi(\lambda).
\]
In the above, $\pi$ is the stationary distribution of $\lambda_t$. As such, we can understand the pseudo-distance as the sample analogue of the population pseudo-distance below, where $\mathcal{A}$ is the support of $\alpha_i$. 
\begin{align*}
d_{i,j} &= \sup_{\alpha \in \mathcal{A}} \left\vert \int \mu^{(0)}(\alpha, \lambda) \big( \mu^{(0)}(\alpha_i, \lambda) - \mu^{(0)}(\alpha_j, \lambda) \big) \dd \pi(\lambda) \right\vert\\
&\geq \frac{1}{2}\int \big( \mu^{(0)}(\alpha_i, \lambda) - \mu^{(0)}(\alpha_j, \lambda) \big)^2 \dd \pi(\lambda)
\end{align*}
The inequality relates the size of the pseudo-distance $d_{i,j}$ to a  squared $L_2$ distance between the functions $\mu^{(0)}(\alpha_i, \cdot)$ and $\mu^{(0)}(\alpha_j, \cdot)$. Thus $d_{i,j}$ measures similarity of the latent factors to the extent that they impact outcomes. We provide sufficient conditions for the consistency of the sample pseudo-metric to this quantity.

It is worth contrasting the sample pseudo-distance above with the Euclidean distance between the history of pre-treatment outcomes, which is defined as $\|Y_{i,1:T_0} - Y_{j,1:T_0}\|^2$. The mean of the squared Euclidean distance conditional on the individual latent factors is given  below, where we again assume that untreated potential outcomes of different individuals are independent conditional on the latent factors.
\begin{align*}
\mathbb{E}[\|Y_{i,1:T_0} - Y_{j,1:T_0}\|^2|\alpha_i,\alpha_j]&=\int \big( \mu^{(0)}(\alpha_i, \lambda) - \mu^{(0)}(\alpha_j, \lambda)\big)^2\dd \pi(\lambda)\\
&+\int E[u_{i,t}^2|\alpha_i,\lambda_t=\lambda]+E[u_{j,t}^2|\alpha_j,\lambda_t=\lambda]\dd \pi(\lambda)
\end{align*}
Thus the Euclidean distance is increasing in the conditional residual variance $E[u_{j,t}^2|\alpha_j,\lambda_t=\lambda]$. Thus in the presence of conditional heteroskedasticity, a small Euclidean distance may reflect that individual $j$'s outcomes have a low residual variance and not that individual $j$ and $i$ have similar latent factors. It is the need to be robust to conditional heteroskedasticity that motivates our use of the pseudo-metric.

Given the pseudo-distance, we may form feasible estimates of $p_{i,t}$ and $\mu^{(0)}_{i,t}$ as follows.
\begin{align*}
\hat{\mu}_{i,t}^{(0)} &= \frac{\sum\limits_{j; w_{j,t} = 0} K(\hat{d}_{i,j}/h)Y_{j,t}}{\sum\limits_{j; w_{j,t} = 0}K(\hat{d}_{i,j}/h)},
\hspace{40pt}
\hat{p}_{i,t} = \frac{\sum\limits_{j} K(\hat{d}_{i,j}/h) w_{j,t}}{\sum\limits_{j}K(\hat{d}_{i,j}/h)}
\end{align*}
One could plug the estimates above into the the formula for the doubly-robust $\mathrm{ATT}_t$ estimate (\ref{ATT_estimator}). However, we instead employ a cross-fitting scheme to further de-bias our estimates. This is in-line with the extensive literature on double machine learning, which demonstrates the utility of cross-fitting for reducing bias and obtaining valid inference (e.g., \citealp{chernozhukov2018double, abadie2024doubly}). The full algorithm with cross-fitting is detailed below along with a variance estimate and confidence interval.
\begin{algorithm}[H]
\caption{Doubly-Robust Estimation and Inference with Cross-Fitting} \label{alg1}
\textbf{Inputs:} Number of folds $\mathcal{K}$, list of bandwidth $\{h_1, ..., h_b\}$, confidence level $\alpha$ \\
\textbf{Returns:} $\mathrm{ATT}_t$ estimate $\hat{\mathrm{ATT}}_t$, variance estimate $\hat{V}_t$, and level $1-\alpha$ confidence interval\\\vspace{-20pt}
\begin{algorithmic}[1]
\STATE Randomly partition $[N] = \{1, 2, ..., N\}$ into $\mathcal{K}$ folds $\{\mathcal{I}_k\}_{k=1}^{\mathcal{K}}$ of size $\approx N / \mathcal{K}$. Let $\mathcal{I}_{-k}=[N]\setminus\mathcal{I}_k$
\FORALL{$k\in[\mathcal{K}]$} 
\FORALL{$i \in \mathcal{I}_{k}$ and $j \in \mathcal{I}_{-k}$}
\STATE Calculate $\hat{d}_{i,j}$, the pseudo distance between $i \in \mathcal{I}_{k}$ and $j \in \mathcal{I}_{-k}$, as follows:
\begin{align*}
\hat{d}_{i,j} \leftarrow
\frac{1}{T_0} \max_{\ell \notin \{i, j\} ; \ell \in I_{-k}} \left\vert \left\langle Y_{\ell, 1:T_0}, Y_{i, 1:T_0} - Y_{j, 1:T_0} \right\rangle \right\vert
\end{align*}
\ENDFOR
\ENDFOR
\FORALL{$h \in \{h_1, ..., h_b\}$}
\FORALL{ $k\in[\mathcal{K}]$ and $i \in \mathcal{I}_{k}$} 
\STATE Calculate $\hat{\mu}_{i,t}^{(1)}$, $\hat{\mu}_{i,t}^{(0)}$, and $\hat{p}_{i,t}$ as follows:
\begin{align*}
\hat{\mu}_{i,t}^{(1)} &\leftarrow \frac{\sum\limits_{j\in\mathcal{I}_{-k}; w_{j,t} = 1} K\left(\hat{d}_{i,j}/h\right)Y_{j,t}}{\sum\limits_{j\in\mathcal{I}_{-k}; w_{j,t} = 1}K\left(\hat{d}_{i,j}/h\right)},
\hspace{15pt}
\hat{\mu}_{i,t}^{(0)} &\leftarrow \frac{\sum\limits_{j\in\mathcal{I}_{-k}; w_{j,t} = 0} K\left(\hat{d}_{i,j}/h\right)Y_{j,t}}{\sum\limits_{j\in\mathcal{I}_{-k}; w_{j,t} = 0}K\left(\hat{d}_{i,j}/h\right)},
\hspace{15pt}
\hat{p}_{i,t} \leftarrow \frac{\sum\limits_{j\in\mathcal{I}_{-k}} K\left(\hat{d}_{i,j}/h\right) w_{j,t}}{\sum\limits_{j\in\mathcal{I}_{-k}}K\left(\hat{d}_{i,j}/h\right)}
\end{align*}
\ENDFOR
\STATE Calculate least squares cross-validation error with the bandwidth $h$ as follows:
\begin{align*}
\mathrm{CV}(h) \leftarrow
\frac{1}{N} \left\{
\sum_{i=1}^{N_{\mathrm{treated}}} \left[ Y_{i,t} - \hat{\mu}_{i,t}^{(1)} \right]^2
+ \sum_{i=1}^{N_{\mathrm{control}}} \left[ Y_{i,t} - \hat{\mu}_{i,t}^{(0)} \right]^2
\right\}
\end{align*}
\ENDFOR
\STATE Select the optimal bandwidth $h_{\mathrm{CV}} := \argmin\limits_{h\in\{h_1, ..., h_b\}} \mathrm{CV}(h)$ and use the corresponding $\hat{\mu}_{i,t}^{(1)}$, $\hat{\mu}_{i,t}^{(0)}$, and $\hat{p}_{i,t}$
\STATE Construct $\hat{\mathrm{ATT}}_t$ using the formula below
\begin{align*}
\hat{\mathrm{ATT}}_t&\leftarrow\frac{1}{N_{1,t}}\sum_{i=1}^N\bigg(Y_{i,t}w_{i,t}-\frac{(1-w_{i,t})Y_{i,t}\hat{p}_{i,t}+(w_{i,t}-\hat{p}_{i,t})\hat{\mu}^{(0)}_{i,t}}{1-\hat{p}_{i,t}}\bigg)\\
\hat{V}_t&\leftarrow\frac{N}{N_{1,t}^2}\sum_{i=1}^N\bigg(Y_{i,t}w_{i,t}-\frac{(1-w_{i,t})Y_{i,t}\hat{p}_{i,t}+(w_{i,t}-\hat{p}_{i,t})\hat{\mu}^{(0)}_{i,t}}{1-\hat{p}_{i,t}}-\frac{N_{1,t}}{N}\hat{\mathrm{ATT}}_t\bigg)^2
\end{align*}
\STATE Form $1-\alpha$ level confidence interval 
$\big[\hat{\mathrm{ATT}}_t \pm Z_{(1-\alpha)/2}\sqrt{\hat{V}/N}\big]$.
\end{algorithmic}
\end{algorithm}
\section{Large Sample Theory} \label{sec:3}

The estimator $\hat{\mathrm{ATT}}_t$ is doubly robust and employs cross-fitting. An extensive literature (e.g., \cite{robins2008higher, chernozhukov2018double, abadie2024doubly}) provides sufficient conditions for $\sqrt{N}$-consistency of such estimates and asymptotically correct coverage of the corresponding confidence intervals. A key condition is that the first stage nuisance-parameter estimates (in our case $\hat{\mu}^{(0)}_{i,t}$ and $\hat{p}_{i,t}$) converge sufficiently quickly. For this condition, the following rates suffice:
\[
\max_{i \in [N]} \left\vert
\hat{\mu}_{i,t}^{(0)} - \mu_{i,t}^{(0)}
\right\vert = o_p(N^{-1/4})\hspace{30pt}\text{and}\hspace{30pt} \max_{i \in [N]} \left\vert
\hat{p}_{i,t}- p_{i,t}
\right\vert =o_p(N^{-1/4})
\]

We establish convergence rates for the first-stage estimates. These rates may be of interest per se because $\mu_{i,t}^{(0)}$  is the optimal prediction of individual $i$'s untreated potential outcome given the latent factors. We show that under certain conditions, the estimates achieve the \citealp{stone1980optimal} optimal rate for non-parametric regression on $\alpha_i$ under Lipschitz continuity. That is, we can achieve the same optimal rate attainable for the infeasible estimates that take $\alpha_i$ as known. However, in order to achieve this rate, we require that the number of pre-treatment periods $T_0$ grows sufficiently quickly with $N$.

\subsection{Rate of Convergence for Outcome and Propensity Score} \label{sec:3-1}


In order to derive convergence rates for our first-stage estimates of $\mu^{(0)}$ and $p_t$ we impose the following additional assumptions.

\begin{assu}{4}[Model and Latent Factors] \label{asm:4}
\hspace{0pt}\\\vspace{-25pt}
\begin{enumerate}[label=(\roman*), ref=(\roman*)]
\item\label{asm:4-2} There is some $\underline{c}>0$ so that $p_{i,t} \geq \underline{c}$ almost surely for all $i$ and $t>T_0$.
\item\label{asm:4-3} $\mathbb{E}[u_{i,t} \vert \alpha, \lambda] = 0$, and for some $\delta > 0$, $\mathbb{E}[\exp(\eta u_{i,t}) | \alpha, \lambda] \leq \exp(\delta \eta^2)$ for all $\eta \in \mathbb{R}$ almost surely and likewise for $\epsilon_{i,t}$. These error terms are jointly independent across time and across individuals conditional on the latent factors.
\item\label{asm:4-4} $\supp(\alpha) \subseteq \mathcal{A}$ where $\mathcal{A}$ is a compact subset of $\mathbb{R}^{d_\alpha}$. $\alpha_i$ and $\lambda_t$ are jointly i.i.d. across $i$ and $t$, respectively. There are constants $0<\underline{c}<\bar{c}<\infty  $  so that for any fixed $\alpha\in\supp(\alpha)$ and any $\eta$, $\underline{c}\eta^{d_\alpha}\leq\mathbb{P}(\|\alpha-\alpha_{i}\|\leq\eta)\leq\bar{c}\eta^{d_\alpha}$. 
\item\label{asm:4-5}  The functions $\mu^{(0)}, \mu^{(1)}, p: \mathcal{A} \times \mathcal{L} \rightarrow \mathbb{R}$ are uniformly bounded and satisfy, for some $ L_0, L_p <\infty$,
\begin{align*}
 \vert \mu^{(0)}(\alpha_1, \lambda) - \mu^{(0)}(\alpha_2, \lambda) \vert &\leq L_0 \Vert \alpha_1 - \alpha_2 \Vert \\
\vert p_t(\alpha_1) - p_t(\alpha_2) \vert &\leq L_p \Vert \alpha_i - \alpha_j \Vert
\end{align*}
for all $\alpha_i, \alpha_j \in \mathcal{A}$, $\lambda \in \mathcal{L}$, and $t\in[T]$. 
\item\label{asm:4-6} $K: \mathbb{R}_{+} \rightarrow \mathbb{R}$ is weakly positive and strictly positive at zero, supported on compact set and bounded by $\bar{K} < \infty$. $K$ satisfies $\vert K(z) - K(z') \vert \leq \bar{K}' \vert z - z' \vert$ for all $z, z' \in \mathbb{R}_{+}$ for some $\bar{K}' > 0$.
\item\label{asm:4-8}
There exist constants $c_1,c_2>0$ so that for all $\alpha_1, \alpha_2 \in \mathcal{A}$
\begin{align}
 \int_{\lambda \in \supp(\lambda)}  \big( \mu^{(0)}(\alpha_1, \lambda) - \mu^{(0)}(\alpha_2, \lambda)  \big)^2\dd \pi(\lambda)  \geq c_1\|\alpha_1-\alpha_2\|^2 \label{lowB}
\end{align}
and
\begin{align}
&\hspace{5pt} \sup_{\alpha\in\supp(\alpha)}\int_{\lambda \in \supp(\lambda)}  \mu^{(0)}(\alpha,\lambda)\big( \mu^{(0)}(\alpha_1, \lambda) - \mu^{(0)}(\alpha_2, \lambda)  \big)\dd \pi(\lambda) \nonumber\\
\geq&\hspace{5pt} c_2\sqrt{\int_{\lambda \in \supp(\lambda)}  \big( \mu^{(0)}(\alpha_1, \lambda) - \mu^{(0)}(\alpha_2, \lambda)  \big)^2\dd \pi(\lambda)}\label{closeness}
\end{align}

\end{enumerate}
\end{assu}


Assumption \ref{asm:4} \ref{asm:4-2} is a standard overlap condition and would be required for regular estimation of the ATT even if $\alpha_i$ were observed. Note that because we are interested in the average effect of treatment on the treated, we only require that the conditional probability of treatment is bounded below away from zero and not above away from one. Assumption \ref{asm:4} \ref{asm:4-3} restricts that the tail behavior of the error terms, requiring them to be sub-Gaussian. The assumption also imposes that the residuals are independent over time given the time-specific factors. This condition allows us to apply particular concentration inequalities in order to obtain fast rates of convergence.

Assumption \ref{asm:4} \ref{asm:4-4} imposes that the latent factors are independent and identically distributed and that the individual-specific factors have compact support. Compact support of the individual-specific latent factors helps to ensure that with high probability, for each treated individual $i$, there exist untreated individuals in the sample whose latent factors are close to those of $i$. Independence and identical distribution of the individual-specific latent factors follows if we understand individuals to be drawn identically and independently from the underlying population. Independence of $\lambda_t$ over time may be plausible if time periods are sufficiently far apart. This restriction on $\lambda_t$ allows us to apply concentration inequalities and ensure fast convergence of the pseudo-distance.

Assumption \ref{asm:4} \ref{asm:4-5} imposes that the functions $\mu^{(0)}$ and $p_t$ are bounded and vary smoothly with the individual-specific factors. This smoothness assumption ensures that individuals with similar latent factors also have similar conditional-mean potential outcomes and treatments. Assumption \ref{asm:4} \ref{asm:4-6} stipulates  properties of the kernel $K$. Kernels that satisfy this condition include are common in the literature, with the Epanechnikov kernel a particularly prevalent choices.

Assumption \ref{asm:4} \ref{asm:4-8} relates the pseudo-distance to the distance between individual latent factors. The assumption imposes first, that the mean-squared distance between $\mu^{(0)}(\alpha_1,\cdot)$ and $\mu^{(0)}(\alpha_2,\cdot)$ is at least proportional to the Euclidean distance between $\alpha_1$ and $\alpha_2$. In addition, the assumption states that the population pseudo-distance is at least proportional to this mean-squared distance. Consider the special case in which $\mu^{(0)}(\alpha,\lambda)=\alpha'M\lambda$ for some fixed matrix $M$. In this case, the condition (\ref{lowB}) holds so long as the matrix $M\int_{\lambda\in\supp(\lambda)}\lambda\lambda'\dd\pi(\lambda) M'$ is strictly positive definite. If this is the case, then the second condition (\ref{closeness}) holds if there is a $c>0$ so that for any $\alpha_1,\alpha_2\in\mathcal{A}$ we have $\frac{c}{\|\alpha_{1}-\alpha_{2}\|}(\alpha_{1}-\alpha_{2})\in \mathcal{A}$. This is true, for example, if $\mathcal{A}$ is a Euclidean ball centered at zero. 

\mycom{Andrei: I would comment the remark below for now... I don't really like it and I think that we will have re-write it anyway at a later point.}

\begin{thm}{1}[Rate of Convergence for the Outcome Model] \label{thm:1}
Suppose Assumptions \ref{asm:1} to \ref{asm:4} hold and $\frac{1}{h}\sqrt{\frac{log(N)}{T_{0}}} \to 0$ and $\frac{\log(N)}{h^{2d_\alpha} N} \to 0$, then
\begin{align*}
\max_{i \in [N]} \left\vert
\hat{\mu}_{i,t}^{(0)} - \mu_{i,t}^{(0)}
\right\vert
= O_{p}\left(h+\sqrt{\frac{log(N)}{h^{d_\alpha}N}}\right)\\
\max_{i \in [N]} \left\vert
\hat{p}_{i,t}^{(0)} - p_{i,t}^{(0)}
\right\vert
= O_{p}\left(h+\sqrt{\frac{log(N)}{h^{d_\alpha}N}}\right)
\end{align*}
\end{thm}

The convergence rate in Theorem \ref{thm:1} is identical to that of the Nadaraya-Watson estimator with observed covariates up to a log term. Setting $h\propto N^{-1/(d_\alpha+2)}$ we get convergence rate $N^{-1/(d_\alpha+2)}\sqrt{log(N)}$ which, up to a log term, is the optimal rate in \cite{stone1980optimal} under Lipschitz continuity. However, the rate at which $h$ can converge to zero is restricted by the condition that $\frac{1}{h}\sqrt{\frac{log(N)}{T_{0}}} \to 0$. Thus the theorem above allows us to achieve the rate in \cite{stone1980optimal}, up to log terms, if and only if $T_{0}$ grows sufficiently quickly that $N^{2/(d_\alpha+2)} / T_{0} \to 0$. In order to obtain $\sqrt{N}$-consistency and $\sqrt{N}$ centered asymptotic normality of the doubly-robust ATT estimator, we require $\max\limits_{i\in[N]} \left\vert \hat{\mu}_{i,t}^{(0)} - \mu_{i,t}^{(0)} \right\vert = o_{p}(N^{-1/4})$ and similarly for $\hat{p}_{i,t}$. In the case of a one-dimensional individual latent factor ($d_\alpha=1$), Theorem \ref{thm:1} ensures that this holds for an appropriate choice of $h$ if and only if $T_{0}$ grows quickly enough that $\frac{N^{1/2}log(N)}{T_{0}} \to 0$.

\subsection{Inference Framework with Rate Double Robustness} \label{sec:3-2}
Theorem \ref{thm:2} below applies well-established ideas from the literature on double machine learning to this setting. The result applies to the estimator with cross-fitting as specified in Algorithm \ref{alg1}.

\begin{thm}{2}[Asymptotic Normality] \label{thm:2}
Suppose Assumptions \ref{asm:1}-\ref{asm:4} all hold. In addition, suppose $\max_{i\in[N]} \left\vert \hat{p}_{i,t}-p_{i,t} \right\vert = o_{p}(1)$, $\max_{i\in[N]} \left\vert \hat{\mu}_{i,t}-\mu_{i,t} \right\vert = o_{p}(1)$, $\max_{i\in[N]} \left\vert (p_{i,t}-\hat{p}_{i,t})(\mu_{i,t}^{(0)}-\hat{\mu}_{i,t}^{(0)}) \right\vert = o_{p}\left(N^{-1/2}\right)$. Finally, suppose that for each $k$, the size of fold $\mathcal{I}_{k}$ grows at the same rate as the same size, that is, $\frac{|\mathcal{I}_{k}|}{N}\to c_{k}$ for some finite $c_{k}>0$. It  follows that
\begin{align*}
\sqrt{N}\hat{\mathrm{ATT}}_{t}=\frac{\sqrt{N}}{N_{1,t}}\sum_{i=1}^{N}\left(Y_{i,t}w_{i,t}-\frac{(1-w_{i,t})Y_{i,t}p_{i,t}+(w_{i,t}-p_{i,t})\mu_{i,t}^{(0)}}{1-p_{i,t}}\right) + o_{p}(1)
\end{align*}
and thus under the stated conditions, $\sqrt{N} \left(\hat{\mathrm{ATT}}_{t}-\mathrm{ATT}_{t}\right) = N\left(0,V_t\right) + o_{p}(1)$.
\end{thm}

The first result in Theorem \ref{thm:2} relates the estimate $\hat{\mathrm{ATT}}_t$ to an infeasible estimate in which the first stage estimates of $\mu_{i,t}^{(0)}$ and $p_{i,t}$ are replaced by their true values. In particular, the Theorem states that under certain conditions, the difference between $\hat{\mathrm{ATT}}_t$ and the infeasible estimate disappears strictly faster than $N^{-1/2}$. It then follows from standard results that $\hat{\mathrm{ATT}}_t$ is root-$N$ asymptotically normal and centered at $ATT_t$.

The proof of Theorem \ref{thm:2} proceeds by similar steps to the results in \cite{chernozhukov2018double} for DML2 estimators. A complicating factor is that, unlike in standard DML2 estimation of the ATT, in our setting the estimates $\hat{\mu}^{(0)}_{i,t}$ and $\hat{p}_{i,t}$ are constructed using the pseudo-distance which includes  both outcome and treatment data for individual $i$ in the pre-treatment period. That is, unlike in standard DML2 ATT estimation, the first-stage estimates $\hat{\mu}^{(0)}_{i,t}$ and $\hat{p}_{i,t}$ depend on some of the outcome data from the fold that contains individual $i$. However, using the conditional serial independence of the errors in Assumption \ref{asm:4}, it is straight-forward to accommodate this dependence.

Theorem \ref{thm:2} requires not only that the first stage estimates are uniformly consistent over the sample, but that the product of the estimation error in $\hat{\mu}^{(0)}_{i,t}$ and  $\hat{p}_{i,t}$ goes to zero uniformly strictly faster that root-$N$. A sufficient condition,  mentioned earlier in this section, is that each of these estimates converges strictly faster than $N^{-1/4}$. The theorem further requires that each fold grows at the same rate as the sample data. This is satisfied if the number of folds is fixed and all of the folds are of equal size. In recent work,  \cite{velez2024asymptoticpropertiesdebiasedmachine} provides conditions under which DML2 estimates are centered root-$N$ asymptotically normal even if leave-one-out cross-fitting is applied, and so it may be possible to weaken this condition on the fold size in Theorem \ref{thm:2}.

\section{Simulation Study} \label{sec:4}

In this section, we illustrate the finite sample properties of the proposed method in a number of numerical experiments. We also compare it with (1) the workhorse TWFE approach, and (2) the method proposed by \cite{feng2024causal}.



\subsection{Comparison with the TWFE approach} \label{sec:4-1}

The TWFE approach is widely used in applied research and is designed to adjust for unobserved additive individual and time fixed effects. In this section, we demonstrate that the proposed method performs comparably well in settings in which the TWFE approach is valid. We also show that the proposed method maintains good finite sample properties in a more complicated setting in which the TWFE approach fails.

Consider a large panel data setting with individuals labeled $i = 1, ..., N$ over time periods $t = 1, ..., T$, where $N \in \{50, 250\}$ and $T_0 \in \{50, 250\}$ with only one post-treatment time period in this numerical study. For the outcome model, we consider the following data generating processes:
\begin{align*}
\begin{cases}
\text{Model 1 (Additive fixed effects): } &Y_{i,t} = \alpha_i + \lambda_t + 0.5 \cdot w_{i,t} + u_{i,t}, \\
\text{Model 2 (Interactive fixed effects): } &Y_{i,t} = \alpha_i \cdot \lambda_t + 0.5 \cdot w_{i,t} + u_{i,t},
\end{cases}
\end{align*}
where the individual latent factor is distributed as $\alpha_i \sim \rm{Uniform}(-1, 1)$, the time latent factor as $\lambda_t \sim \rm{Uniform}(-1, 1)$, and $u_{i,t} \sim N(0, 0.5^2)$. Here, the target estimand, the average treatment effect on the treated, is $\mathrm{ATT}_t = 0.5$. Models 1 and 2 are standard additive and interactive fixed effects models commonly used in panel data. 

The treatment assignment mechanism depends on unobserved latent characteristics for each $i$. Specifically, the propensity score$p_{i,t}$ follows the model
\begin{align*}
p_{i,T} = \frac{\exp(\alpha_i)}{1 + \exp(\alpha_i)}
\end{align*}
With the configuration of factors located on compact supports, it results in roughly $p_{i,T} \in [0.2689, 0.7311]$, which satisfies the overlapping condition required for the proposed method.

We simulate $500$ replications from these models. In these simulations, we utilize the Epanechnikov kernel, that satisfies the requirements of Theorem 1 and is defined as $K(x) = \frac{3}{4} (1 - x^2) \cdot \mathbf{1}\{\vert x \vert \leq 1 \}$. We use this kernel for both the outcome and propensity score imputations in our proposed method. Using this kernel, we compared the performance of different methods: (i) TWFE; (ii) the local PCA method proposed by \cite{feng2024causal} with leave-one-out cross validation for the choice of nearest neighbors\footnote{The implementation of the method proposed by \cite{feng2024causal} follows the replication files available on the author's website: \url{https://github.com/yingjieum/replication-Feng_2024}.}; (iii) an infeasible doubly robust estimates in which the pseudo-distance is replaced by the oracle $\ell_2$ distance between the true $\alpha$s and in which the estimated propensity score is replaced with the true propensity score; (iv) an infeasible doubly robust estimates in which the pseudo-distance is replaced by the $\ell_2$ distance between the true $\alpha$s but in which the propensity score is unknown and estimated using the oracle distance; (iv) our proposed method.

In the numerical experiments, the bandwidth is selected between $0.05$ and $5$.  \cite{velez2024asymptoticpropertiesdebiasedmachine} suggests that leave-one-out estimation is the optimal cross-fitting procedure for DML2 estimators in terms of both bias and the second-order asymptotic mean squared error under certain conditions. Hence, in addition to the results from 2-fold cross-fitting and cross-validation bandwidth selection, we also implemented the proposed method using the leave-one-out procedure for this numerical exercise. In Table \ref{tab:sim_1} and \ref{tab:sim_2}, we report the 95\% confidence interval coverage and other statistics across $500$ replications in our simulation for each outcome model and each method. Additionally, we note that due to the non-existence of moments, some mean versions of the statistics can be distorted by outliers in the replications. This distortion is more pronounced when the number of individual $N$ is small. For this reason, we report the median absolute deviation and median confidence interval length rather than the means.

\begin{table}[H]
\begin{adjustwidth}{-1cm}{-1cm}
\begin{center}
\begin{threeparttable}
\caption{Simulation results for model 1 (additive fixed effects)}\label{tab:sim_1}
\begin{small}
\addtolength{\tabcolsep}{-3pt}
\renewcommand{\arraystretch}{1.0}
\begin{tabular}{cc|c|c|ccccc}
\toprule
\multicolumn{2}{c|}{\multirow{2}{*}{\textbf{}}}                                                                                                & \multirow{2}{*}{\textbf{TWFE}} & \textbf{Local PCA}                              & \multicolumn{5}{c}{\textbf{Proposed Method}}                                                                                             \\
\multicolumn{2}{c|}{}                                                                                                                          &                                & \textbf{(\citealp{feng2024causal})} & \textbf{W/ true $\boldsymbol{p_{i,t}}$} & \textbf{W/ true $\boldsymbol{\alpha_i}$} & \multicolumn{3}{c}{\textbf{Pseudo dist}}            \\ \cmidrule(lr){3-3} \cmidrule(lr){4-4} \cmidrule(lr){5-6} \cmidrule(lr){7-9}
\multicolumn{2}{c|}{\textbf{Cross-Fitting}}                                                                                                    & \textbf{}          & \textbf{}              & \textbf{NA}                             & \textbf{LOO}                             & \textbf{NA}     & \textbf{LOO}    & \textbf{2-Fold} \\
\multicolumn{2}{c|}{\textbf{Bandwidth Selection}}                                                                                              & \textbf{}          & \textbf{}              & \multicolumn{2}{c}{\textbf{LOO}}                                                   & \multicolumn{2}{c}{\textbf{LOO}}  & \textbf{2-Fold} \\
\midrule
\multirow{3}{*}{\textbf{\begin{tabular}[c]{@{}c@{}}$\boldsymbol{N=50}$\\ $\boldsymbol{T_0=50}$\end{tabular}}}   & \textbf{Median Abs. Dev.} & 0.0918                         & 0.1374             & 0.1080                                  & 0.1207                                   & 0.1110          & 0.1210          & 0.1488          \\
                                                                                                                & \textbf{95\% CI Coverage} & \textbf{0.9383}                & \textbf{0.9234}    & \textbf{0.9340}                         & \textbf{0.9702}                          & \textbf{0.9489} & \textbf{0.9809} & \textbf{0.9532} \\
                                                                                                                & \textbf{Median CI Length} & 0.5726                         & 0.6906             & 0.6649                                  & 0.7714                                   & 0.6417          & 0.7806          & 0.8574          \\
\midrule
\multirow{3}{*}{\textbf{\begin{tabular}[c]{@{}c@{}}$\boldsymbol{N=50}$\\ $\boldsymbol{T_0=250}$\end{tabular}}}  & \textbf{Median Abs. Dev.} & 0.0919                         & 0.1478             & 0.1112                                  & 0.1250                                   & 0.1155          & 0.1246          & 0.1682          \\
                                                                                                                & \textbf{95\% CI Coverage} & \textbf{0.9350}                & \textbf{0.9036}    & \textbf{0.9371}                         & \textbf{0.9748}                          & \textbf{0.9371} & \textbf{0.9665} & \textbf{0.9497} \\
                                                                                                                & \textbf{Median CI Length} & 0.5670                         & 0.6757             & 0.6673                                  & 0.7861                                   & 0.6455          & 0.7842          & 0.8480          \\
\midrule
\multirow{3}{*}{\textbf{\begin{tabular}[c]{@{}c@{}}$\boldsymbol{N=250}$\\ $\boldsymbol{T_0=50}$\end{tabular}}}  & \textbf{Median Abs. Dev.} & 0.0482                         & 0.0527             & 0.0503                                  & 0.0506                                   & 0.0510          & 0.0524          & 0.0574          \\
                                                                                                                & \textbf{95\% CI Coverage} & \textbf{0.9460}                & \textbf{0.9660}    & \textbf{0.9620}                         & \textbf{0.9780}                          & \textbf{0.9620} & \textbf{0.9740} & \textbf{0.9540} \\
                                                                                                                & \textbf{Median CI Length} & 0.2520                         & 0.3049             & 0.2984                                  & 0.3112                                   & 0.2943          & 0.3139          & 0.3253          \\
\midrule
\multirow{3}{*}{\textbf{\begin{tabular}[c]{@{}c@{}}$\boldsymbol{N=250}$\\ $\boldsymbol{T_0=250}$\end{tabular}}} & \textbf{Median Abs. Dev.} & 0.0430                         & 0.0514             & 0.0487                                  & 0.0516                                   & 0.0490          & 0.0489          & 0.0550          \\
                                                                                                                & \textbf{95\% CI Coverage} & \textbf{0.9600}                & \textbf{0.9640}    & \textbf{0.9640}                         & \textbf{0.9720}                          & \textbf{0.9600} & \textbf{0.9780} & \textbf{0.9580} \\
                                                                                                                & \textbf{Median CI Length} & 0.2502                         & 0.2998             & 0.2974                                  & 0.3117                                   & 0.2945          & 0.3121          & 0.3230          \\
\bottomrule
\end{tabular}
\end{small}
\begin{tablenotes}
\footnotesize
\setlength\labelsep{0pt}
\item \emph{Notes}: LOO stands for leave-one-out procedure in cross-fitting or cross-validation in bandwidth selection. NA stands for not doing the procedure.
\end{tablenotes}
\end{threeparttable}
\end{center}
\end{adjustwidth}
\end{table}

For model 1, the additive fixed effects model, the proposed method shows performance similar to the two-way fixed effects method in terms of 95\% confidence interval coverage, which is under an ideal data generating process for TWFE. Additionally, the proposed method, which incorporates the pseudo distance, effectively captures the distance between unobserved latent characteristics, which is evinced by the performance of our method relative to the oracle estimators which use the true propensity score and/or the true distance in the $\alpha$s. However, there is some loss of efficiency when the sample size $N$ is small. This efficiency loss can be mitigated by increasing the number of folds or by using the leave-one-out procedure for bandwidth selection.

\begin{table}[H]
\begin{adjustwidth}{-1cm}{-1cm}
\begin{center}
\begin{threeparttable}
\caption{Simulation results for model 2 (interactive fixed effects)}\label{tab:sim_2}
\begin{small}
\addtolength{\tabcolsep}{-3pt}
\renewcommand{\arraystretch}{1.0}
\begin{tabular}{cc|c|c|ccccc}
\toprule
\multicolumn{2}{c|}{\multirow{2}{*}{\textbf{}}}                                                                                                & \multirow{2}{*}{\textbf{TWFE}} & \textbf{Local PCA}                              & \multicolumn{5}{c}{\textbf{Proposed Method}}                                                                                             \\
\multicolumn{2}{c|}{}                                                                                                                          &                                & \textbf{(\citealp{feng2024causal})} & \textbf{W/ true $\boldsymbol{p_{i,t}}$} & \textbf{W/ true $\boldsymbol{\alpha_i}$} & \multicolumn{3}{c}{\textbf{Pseudo dist}}            \\ \cmidrule(lr){3-3} \cmidrule(lr){4-4} \cmidrule(lr){5-6} \cmidrule(lr){7-9}
\multicolumn{2}{c|}{\textbf{Cross-Fitting}}                                                                                                    & \textbf{}          & \textbf{}              & \textbf{NA}                             & \textbf{LOO}                             & \textbf{NA}     & \textbf{LOO}    & \textbf{2-Fold} \\
\multicolumn{2}{c|}{\textbf{Bandwidth Selection}}                                                                                              & \textbf{}          & \textbf{}              & \multicolumn{2}{c}{\textbf{LOO}}                                                   & \multicolumn{2}{c}{\textbf{LOO}}  & \textbf{2-Fold} \\
\midrule
\multirow{3}{*}{\textbf{\begin{tabular}[c]{@{}c@{}}$\boldsymbol{N=50}$\\ $\boldsymbol{T_0=50}$\end{tabular}}}   & \textbf{Median Abs. Dev.} & 0.1491                         & 0.1165             & 0.1047                                  & 0.1115                                   & 0.0994          & 0.1088          & 0.1311          \\
                                                                                                                & \textbf{95\% CI Coverage} & \textbf{0.8655}                & \textbf{0.9458}    & \textbf{0.9414}                         & \textbf{0.9588}                          & \textbf{0.9393} & \textbf{0.9631} & \textbf{0.9544} \\
                                                                                                                & \textbf{Median CI Length} & 0.6585                         & 0.6753             & 0.6396                                  & 0.6878                                   & 0.6096          & 0.6946          & 0.7396          \\
\midrule
\multirow{3}{*}{\textbf{\begin{tabular}[c]{@{}c@{}}$\boldsymbol{N=50}$\\ $\boldsymbol{T_0=250}$\end{tabular}}}  & \textbf{Median Abs. Dev.} & 0.1735                         & 0.1208             & 0.1112                                  & 0.1170                                   & 0.1093          & 0.1165          & 0.1416          \\
                                                                                                                & \textbf{95\% CI Coverage} & \textbf{0.8273}                & \textbf{0.9296}    & \textbf{0.9275}                         & \textbf{0.9510}                          & \textbf{0.9339} & \textbf{0.9616} & \textbf{0.9360} \\
                                                                                                                & \textbf{Median CI Length} & 0.6581                         & 0.6599             & 0.6476                                  & 0.7008                                   & 0.6179          & 0.6988          & 0.7264          \\
\midrule
\multirow{3}{*}{\textbf{\begin{tabular}[c]{@{}c@{}}$\boldsymbol{N=250}$\\ $\boldsymbol{T_0=50}$\end{tabular}}}  & \textbf{Median Abs. Dev.} & 0.1621                         & 0.0550             & 0.0486                                  & 0.0493                                   & 0.0535          & 0.0521          & 0.0556          \\
                                                                                                                & \textbf{95\% CI Coverage} & \textbf{0.4420}                & \textbf{0.9660}    & \textbf{0.9660}                         & \textbf{0.9780}                          & \textbf{0.9600} & \textbf{0.9680} & \textbf{0.9620} \\
                                                                                                                & \textbf{Median CI Length} & 0.2863                         & 0.2999             & 0.2935                                  & 0.2964                                   & 0.2836          & 0.2965          & 0.3023          \\
\midrule
\multirow{3}{*}{\textbf{\begin{tabular}[c]{@{}c@{}}$\boldsymbol{N=250}$\\ $\boldsymbol{T_0=250}$\end{tabular}}} & \textbf{Median Abs. Dev.} & 0.1459                         & 0.0486             & 0.0481                                  & 0.0492                                   & 0.0474          & 0.0487          & 0.0543          \\
                                                                                                                & \textbf{95\% CI Coverage} & \textbf{0.5020}                & \textbf{0.9680}    & \textbf{0.9720}                         & \textbf{0.9660}                          & \textbf{0.9620} & \textbf{0.9680} & \textbf{0.9500} \\
                                                                                                                & \textbf{Median CI Length} & 0.2860                         & 0.2964             & 0.2927                                  & 0.2969                                   & 0.2849          & 0.2968          & 0.3011          \\
\bottomrule
\end{tabular}
\end{small}
\begin{tablenotes}
\footnotesize
\setlength\labelsep{0pt}
\item \emph{Notes}: LOO stands for leave-one-out procedure in cross-fitting or cross-validation in bandwidth selection. NA stands for not doing the procedure.
\end{tablenotes}
\end{threeparttable}
\end{center}
\end{adjustwidth}
\end{table}

For model 2 with the interactive fixed effects, the TWFE method  struggles to achieve approximately correct 95\% confidence interval coverage. This problem worsens as $N$ increases and there are more individual latent characteristics to capture. In contrast, the proposed method demonstrates relatively good coverage for all different choices of $N$ and $T$.

\subsection{Comparison with Method Proposed by \cite{feng2024causal}} \label{sec:4-2}

We compare the performance of our approach with the method proposed by \cite{feng2024causal}, which, to the best of our knowledge, is one of the few available inferential methods for average causal effects in nonlinear factor models. In this subsection, we use precisely the same data generating processes employed in \cite{feng2024causal} but we consider different values of $N$ and $T$. The implementation is based on the replication code provided by the author. These numerical exercises are designed to investigate how the considered methods perform under relatively small sample sizes which may be more typical of microeconomic applications. 

Again, consider a large panel data setting with individuals labeled $i = 1, ..., N$ over time periods $t = 1, ..., T$, where $N \in \{50, 250\}$ and $T_0 \in \{50, 250, 1000\}$ with only one post-treatment time period. Our target estimand is the counterfactual mean $\mathbb{E}[Y_{i,T}(0) | w_{i,T} = 1]$ which is the object of interest in the simulations of  \cite{feng2024causal}. Note that this object is readily calculated from our $\hat{\mathrm{ATT}}_t$ estimate because $\mathbb{E}[Y_{i,T} | w_{i,T} = 1]$ is trivially identified. We consider the following DGPs for the untreated potential outcomes:
\begin{align*}
\begin{cases}
\text{Model 3 (Nonlinear factor model): } &Y_{i,t}(0) = (\alpha_i - \lambda_t)^2 + u_{i,t}, \\
\text{Model 4 (Gaussian kernel): } &Y_{i,t}(0) = \frac{1}{0.1 \cdot \sqrt{2 \pi}} \exp\left[ -100(\alpha_i - \lambda_t)^2 \right] + u_{i,t}, \\
\text{Model 5 (Exponential kernel): } &Y_{i,t}(0) = \exp\left( -10 \vert \alpha_i - \lambda_t \vert \right) + u_{i,t},
\end{cases}
\end{align*}
for $t \in \{1, ..., T_0\}$. In the post-treatment time period, all three models follow
\begin{align*}
Y_{i,T}(0) &= \alpha_i + \alpha_i^2 + \epsilon_{i,T} \\
Y_{i,T}(1) &= 2 \alpha_i + \alpha_i^2 + 1 + \epsilon_{i,T}
\end{align*}
where the individual latent factor $\alpha_i \sim \rm{Uniform}(-0, 1)$ and time latent factor $\lambda_t \sim \rm{Uniform}(0, 1)$, $u_{i,t} \sim N(0, 0.5^2)$, and $\epsilon_{i,t} \sim N(0, 1)$. The propensity score is given by 
\begin{align*}
p_{i,T} = \frac{\exp\left[ (\alpha_i - 0.5) + (\alpha_i - 0.5)^2 \right])}{1 + \exp\left[ (\alpha_i - 0.5) + (\alpha_i - 0.5)^2 \right])}.
\end{align*}
For this DGP, we roughly have $p_{i,T} \in [0.4378, 0.6792]$, which satisfies the required overlapping condition.

Model 3 is a nonlinear factor model. Models 4 and 5 are the Gaussian kernel and exponential kernel. A similar simulation design is also used \cite{fernandez2021low}. 

The results are provided in Tables \ref{tab:sim_3}, \ref{tab:sim_4}, and \ref{tab:sim_5} below. We consider the same methods and report the same statistics as for the previous numerical experiments. The number of replications is $500$.


\begin{table}[H]
\begin{adjustwidth}{-1cm}{-1cm}
\begin{center}
\begin{threeparttable}
\caption{Simulation results for model 3 (nonlinear factor model)}\label{tab:sim_3}
\begin{small}
\addtolength{\tabcolsep}{-3pt}
\renewcommand{\arraystretch}{1.0}
\begin{tabular}{cc|c|ccccc}
\toprule
\multicolumn{2}{c|}{\multirow{2}{*}{\textbf{}}}                                                                                               & \textbf{Local PCA} & \multicolumn{5}{c}{\textbf{Proposed Method}}                                                                                             \\
\multicolumn{2}{c|}{}                                                                                                                         & \textbf{(\citealp{feng2024causal})}      & \textbf{W/ true $\boldsymbol{p_{i,t}}$} & \textbf{W/ true $\boldsymbol{\alpha_i}$} & \multicolumn{3}{c}{\textbf{Pseudo distance}}            \\ \cmidrule(lr){3-3} \cmidrule(lr){4-5} \cmidrule(lr){6-8}
\multicolumn{2}{c|}{\textbf{Cross-Fitting}}                                                                                                   & \textbf{}          & \textbf{NA}                             & \textbf{LOO}                             & \textbf{NA}     & \textbf{LOO}    & \textbf{2-Fold} \\
\multicolumn{2}{c|}{\textbf{Bandwidth Selection}}                                                                                             & \textbf{}          & \multicolumn{2}{c}{\textbf{LOO}}                                                   & \multicolumn{2}{c}{\textbf{LOO}}  & \textbf{2-Fold} \\
\midrule
\multirow{3}{*}{\textbf{\begin{tabular}[c]{@{}c@{}}$\boldsymbol{N=50}$\\ $\boldsymbol{T_0=50}$\end{tabular}}}    & \textbf{Median Abs. Dev.} & 0.3008             & 0.1660                                  & 0.1910                                   & 0.1946          & 0.2024          & 0.2444          \\
                                                                                                                 & \textbf{95\% CI Coverage} & \textbf{0.7992}    & \textbf{0.8912}                         & \textbf{0.9310}                          & \textbf{0.7678} & \textbf{0.8954} & \textbf{0.8870} \\
                                                                                                                 & \textbf{Median CI Length} & 1.2260             & 0.8704                                  & 1.0483                                   & 0.7287          & 0.9957          & 1.0672          \\
\midrule
\multirow{3}{*}{\textbf{\begin{tabular}[c]{@{}c@{}}$\boldsymbol{N=50}$\\ $\boldsymbol{T_0=250}$\end{tabular}}}   & \textbf{Median Abs. Dev.} & 0.2267             & 0.1674                                  & 0.1978                                   & 0.1884          & 0.2067          & 0.2531          \\
                                                                                                                 & \textbf{95\% CI Coverage} & \textbf{0.8628}    & \textbf{0.8857}                         & \textbf{0.9314}                          & \textbf{0.7464} & \textbf{0.9064} & \textbf{0.8877} \\
                                                                                                                 & \textbf{Median CI Length} & 1.1340             & 0.8651                                  & 1.0505                                   & 0.6604          & 1.0277          & 1.1069          \\
\midrule
\multirow{3}{*}{\textbf{\begin{tabular}[c]{@{}c@{}}$\boldsymbol{N=50}$\\ $\boldsymbol{T_0=1000}$\end{tabular}}}  & \textbf{Median Abs. Dev.} & 0.2020             & 0.1602                                  & 0.1905                                   & 0.1801          & 0.2008          & 0.2397          \\
                                                                                                                 & \textbf{95\% CI Coverage} & \textbf{0.9053}    & \textbf{0.9011}                         & \textbf{0.9389}                          & \textbf{0.8442} & \textbf{0.9347} & \textbf{0.9158} \\
                                                                                                                 & \textbf{Median CI Length} & 1.0297             & 0.8782                                  & 1.0505                                   & 0.7094          & 1.0597          & 1.1487          \\
\midrule
\multirow{3}{*}{\textbf{\begin{tabular}[c]{@{}c@{}}$\boldsymbol{N=250}$\\ $\boldsymbol{T_0=50}$\end{tabular}}}   & \textbf{Median Abs. Dev.} & 0.1788             & 0.0683                                  & 0.0749                                   & 0.1175          & 0.1148          & 0.1282          \\
                                                                                                                 & \textbf{95\% CI Coverage} & \textbf{0.6800}    & \textbf{0.9360}                         & \textbf{0.9540}                          & \textbf{0.6700} & \textbf{0.8180} & \textbf{0.8080} \\
                                                                                                                 & \textbf{Median CI Length} & 0.4994             & 0.4093                                  & 0.4346                                   & 0.3361          & 0.4367          & 0.4463          \\
\midrule
\multirow{3}{*}{\textbf{\begin{tabular}[c]{@{}c@{}}$\boldsymbol{N=250}$\\ $\boldsymbol{T_0=250}$\end{tabular}}}  & \textbf{Median Abs. Dev.} & 0.0899             & 0.0683                                  & 0.0749                                   & 0.0831          & 0.0805          & 0.0930          \\
                                                                                                                 & \textbf{95\% CI Coverage} & \textbf{0.9240}    & \textbf{0.9360}                         & \textbf{0.9540}                          & \textbf{0.8380} & \textbf{0.9200} & \textbf{0.9220} \\
                                                                                                                 & \textbf{Median CI Length} & 0.4685             & 0.4093                                  & 0.4346                                   & 0.3451          & 0.4182          & 0.4401          \\
\midrule
\multirow{3}{*}{\textbf{\begin{tabular}[c]{@{}c@{}}$\boldsymbol{N=250}$\\ $\boldsymbol{T_0=1000}$\end{tabular}}} & \textbf{Median Abs. Dev.} & 0.0792             & 0.0683                                  & 0.0749                                   & 0.0734          & 0.0768          & 0.0825          \\
                                                                                                                 & \textbf{95\% CI Coverage} & \textbf{0.9420}    & \textbf{0.9360}                         & \textbf{0.9540}                          & \textbf{0.8760} & \textbf{0.9480} & \textbf{0.9240} \\
                                                                                                                 & \textbf{Median CI Length} & 0.4534             & 0.4093                                  & 0.4346                                   & 0.3502          & 0.4468          & 0.4643          \\
\bottomrule
\end{tabular}
\end{small}
\begin{tablenotes}
\footnotesize
\setlength\labelsep{0pt}
\item \emph{Notes}: LOO stands for leave-one-out procedure in cross-fitting or cross-validation in bandwidth selection. NA stands for not doing the procedure.
\end{tablenotes}
\end{threeparttable}
\end{center}
\end{adjustwidth}
\end{table}

\begin{table}[H]
\begin{adjustwidth}{-1cm}{-1cm}
\begin{center}
\begin{threeparttable}
\caption{Simulation results for model 4 (Gaussian kernel)}\label{tab:sim_4}
\begin{small}
\addtolength{\tabcolsep}{-3pt}
\renewcommand{\arraystretch}{1.0}
\begin{tabular}{cc|c|ccccc}
\toprule
\multicolumn{2}{c|}{\multirow{2}{*}{\textbf{}}}                                                                                               & \textbf{Local PCA} & \multicolumn{5}{c}{\textbf{Proposed Method}}                                                                                             \\
\multicolumn{2}{c|}{}                                                                                                                         & \textbf{(\citealp{feng2024causal})}      & \textbf{W/ true $\boldsymbol{p_{i,t}}$} & \textbf{W/ true $\boldsymbol{\alpha_i}$} & \multicolumn{3}{c}{\textbf{Pseudo distance}}            \\ \cmidrule(lr){3-3} \cmidrule(lr){4-5} \cmidrule(lr){6-8}
\multicolumn{2}{c|}{\textbf{Cross-Fitting}}                                                                                                   & \textbf{}          & \textbf{NA}                             & \textbf{LOO}                             & \textbf{NA}     & \textbf{LOO}    & \textbf{2-Fold} \\
\multicolumn{2}{c|}{\textbf{Bandwidth Selection}}                                                                                             & \textbf{}          & \multicolumn{2}{c}{\textbf{LOO}}                                                   & \multicolumn{2}{c}{\textbf{LOO}}  & \textbf{2-Fold} \\
\midrule
\multirow{3}{*}{\textbf{\begin{tabular}[c]{@{}c@{}}$\boldsymbol{N=50}$\\ $\boldsymbol{T_0=50}$\end{tabular}}}    & \textbf{Median Abs. Dev.} & 0.2325             & 0.1588                                  & 0.1890                                   & 0.1823          & 0.2164          & 0.2391          \\
                                                                                                                 & \textbf{95\% CI Coverage} & \textbf{0.8827}    & \textbf{0.8971}                         & \textbf{0.9403}                          & \textbf{0.8416} & \textbf{0.9465} & \textbf{0.9095} \\
                                                                                                                 & \textbf{Median CI Length} & 1.0253             & 0.8641                                  & 1.0524                                   & 0.7793          & 1.2376          & 1.2539          \\
\midrule
\multirow{3}{*}{\textbf{\begin{tabular}[c]{@{}c@{}}$\boldsymbol{N=50}$\\ $\boldsymbol{T_0=250}$\end{tabular}}}   & \textbf{Median Abs. Dev.} & 0.1971             & 0.1600                                  & 0.1904                                   & 0.1779          & 0.2112          & 0.2779          \\
                                                                                                                 & \textbf{95\% CI Coverage} & \textbf{0.8916}    & \textbf{0.8978}                         & \textbf{0.9366}                          & \textbf{0.8425} & \textbf{0.9407} & \textbf{0.9243} \\
                                                                                                                 & \textbf{Median CI Length} & 0.9587             & 0.8651                                  & 1.0597                                   & 0.8034          & 1.2359          & 1.3176          \\
\midrule
\multirow{3}{*}{\textbf{\begin{tabular}[c]{@{}c@{}}$\boldsymbol{N=50}$\\ $\boldsymbol{T_0=1000}$\end{tabular}}}  & \textbf{Median Abs. Dev.} & 0.1980             & 0.1600                                  & 0.1904                                   & 0.1770          & 0.2254          & 0.2556          \\
                                                                                                                 & \textbf{95\% CI Coverage} & \textbf{0.8941}    & \textbf{0.8941}                         & \textbf{0.9389}                          & \textbf{0.8411} & \textbf{0.9328} & \textbf{0.9002} \\
                                                                                                                 & \textbf{Median CI Length} & 0.9606             & 0.8609                                  & 1.0535                                   & 0.7981          & 1.2234          & 1.2446          \\
\midrule
\multirow{3}{*}{\textbf{\begin{tabular}[c]{@{}c@{}}$\boldsymbol{N=250}$\\ $\boldsymbol{T_0=50}$\end{tabular}}}   & \textbf{Median Abs. Dev.} & 0.0905             & 0.0683                                  & 0.0749                                   & 0.0700          & 0.0737          & 0.1046          \\
                                                                                                                 & \textbf{95\% CI Coverage} & \textbf{0.8920}    & \textbf{0.9360}                         & \textbf{0.9540}                          & \textbf{0.9260} & \textbf{0.9680} & \textbf{0.9300} \\
                                                                                                                 & \textbf{Median CI Length} & 0.4514             & 0.4093                                  & 0.4346                                   & 0.3985          & 0.4700          & 0.5035          \\
\midrule
\multirow{3}{*}{\textbf{\begin{tabular}[c]{@{}c@{}}$\boldsymbol{N=250}$\\ $\boldsymbol{T_0=250}$\end{tabular}}}  & \textbf{Median Abs. Dev.} & 0.0757             & 0.0683                                  & 0.0749                                   & 0.0753          & 0.0788          & 0.0935          \\
                                                                                                                 & \textbf{95\% CI Coverage} & \textbf{0.9420}    & \textbf{0.9360}                         & \textbf{0.9540}                          & \textbf{0.9400} & \textbf{0.9620} & \textbf{0.9400} \\
                                                                                                                 & \textbf{Median CI Length} & 0.4430             & 0.4093                                  & 0.4346                                   & 0.4110          & 0.4584          & 0.4921          \\
\midrule
\multirow{3}{*}{\textbf{\begin{tabular}[c]{@{}c@{}}$\boldsymbol{N=250}$\\ $\boldsymbol{T_0=1000}$\end{tabular}}} & \textbf{Median Abs. Dev.} & 0.0847             & 0.0683                                  & 0.0749                                   & 0.0733          & 0.0757          & 0.0935          \\
                                                                                                                 & \textbf{95\% CI Coverage} & \textbf{0.9440}    & \textbf{0.9360}                         & \textbf{0.9540}                          & \textbf{0.9420} & \textbf{0.9580} & \textbf{0.9440} \\
                                                                                                                 & \textbf{Median CI Length} & 0.4379             & 0.4093                                  & 0.4346                                   & 0.4129          & 0.4543          & 0.4787         \\
\bottomrule
\end{tabular}
\end{small}
\begin{tablenotes}
\footnotesize
\setlength\labelsep{0pt}
\item \emph{Notes}: LOO stands for leave-one-out procedure in cross-fitting or cross-validation in bandwidth selection. NA stands for not doing the procedure.
\end{tablenotes}
\end{threeparttable}
\end{center}
\end{adjustwidth}
\end{table}

\begin{table}[H]
\begin{adjustwidth}{-1cm}{-1cm}
\begin{center}
\begin{threeparttable}
\caption{Simulation results for model 5 (exponential kernel)}\label{tab:sim_5}
\begin{small}
\addtolength{\tabcolsep}{-3pt}
\renewcommand{\arraystretch}{1.0}
\begin{tabular}{cc|c|ccccc}
\toprule
\multicolumn{2}{c|}{\multirow{2}{*}{\textbf{}}}                                                                                               & \textbf{Local PCA} & \multicolumn{5}{c}{\textbf{Proposed Method}}                                                                                             \\
\multicolumn{2}{c|}{}                                                                                                                         & \textbf{(\citealp{feng2024causal})}      & \textbf{W/ true $\boldsymbol{p_{i,t}}$} & \textbf{W/ true $\boldsymbol{\alpha_i}$} & \multicolumn{3}{c}{\textbf{Pseudo distance}}            \\ \cmidrule(lr){3-3} \cmidrule(lr){4-5} \cmidrule(lr){6-8}
\multicolumn{2}{c|}{\textbf{Cross-Fitting}}                                                                                                   & \textbf{}          & \textbf{NA}                             & \textbf{LOO}                             & \textbf{NA}     & \textbf{LOO}    & \textbf{2-Fold} \\
\multicolumn{2}{c|}{\textbf{Bandwidth Selection}}                                                                                             & \textbf{}          & \multicolumn{2}{c}{\textbf{LOO}}                                                   & \multicolumn{2}{c}{\textbf{LOO}}  & \textbf{2-Fold} \\
\midrule
\multirow{3}{*}{\textbf{\begin{tabular}[c]{@{}c@{}}$\boldsymbol{N=50}$\\ $\boldsymbol{T_0=50}$\end{tabular}}}    & \textbf{Median Abs. Dev.} & 0.2980             & 0.1662                                  & 0.1910                                   & 0.1982          & 0.1980          & 0.2426          \\
                                                                                                                 & \textbf{95\% CI Coverage} & \textbf{0.8189}    & \textbf{0.8909}                         & \textbf{0.9321}                          & \textbf{0.7716} & \textbf{0.8951} & \textbf{0.8848} \\
                                                                                                                 & \textbf{Median CI Length} & 1.1975             & 0.8671                                  & 1.0501                                   & 0.7512          & 0.9773          & 1.0526          \\
\midrule
\multirow{3}{*}{\textbf{\begin{tabular}[c]{@{}c@{}}$\boldsymbol{N=50}$\\ $\boldsymbol{T_0=250}$\end{tabular}}}   & \textbf{Median Abs. Dev.} & 0.2459             & 0.1630                                  & 0.1910                                   & 0.1969          & 0.2035          & 0.2535          \\
                                                                                                                 & \textbf{95\% CI Coverage} & \textbf{0.8627}    & \textbf{0.8914}                         & \textbf{0.9344}                          & \textbf{0.7357} & \textbf{0.8996} & \textbf{0.8934} \\
                                                                                                                 & \textbf{Median CI Length} & 1.1007             & 0.8641                                  & 1.0507                                   & 0.6536          & 1.0476          & 1.1229          \\
\midrule
\multirow{3}{*}{\textbf{\begin{tabular}[c]{@{}c@{}}$\boldsymbol{N=50}$\\ $\boldsymbol{T_0=1000}$\end{tabular}}}  & \textbf{Median Abs. Dev.} & 0.2047             & 0.1602                                  & 0.1902                                   & 0.1715          & 0.1880          & 0.2444          \\
                                                                                                                 & \textbf{95\% CI Coverage} & \textbf{0.8457}    & \textbf{0.9027}                         & \textbf{0.9366}                          & \textbf{0.8013} & \textbf{0.9091} & \textbf{0.9133} \\
                                                                                                                 & \textbf{Median CI Length} & 0.9520             & 0.8787                                  & 1.0512                                   & 0.6684          & 1.1039          & 1.1549          \\
\midrule
\multirow{3}{*}{\textbf{\begin{tabular}[c]{@{}c@{}}$\boldsymbol{N=250}$\\ $\boldsymbol{T_0=50}$\end{tabular}}}   & \textbf{Median Abs. Dev.} & 0.2117             & 0.0683                                  & 0.0749                                   & 0.1402          & 0.1346          & 0.1379          \\
                                                                                                                 & \textbf{95\% CI Coverage} & \textbf{0.6120}    & \textbf{0.9360}                         & \textbf{0.9540}                          & \textbf{0.6160} & \textbf{0.7760} & \textbf{0.7660} \\
                                                                                                                 & \textbf{Median CI Length} & 0.4940             & 0.4093                                  & 0.4346                                   & 0.3449          & 0.4373          & 0.4473          \\
\midrule
\multirow{3}{*}{\textbf{\begin{tabular}[c]{@{}c@{}}$\boldsymbol{N=250}$\\ $\boldsymbol{T_0=250}$\end{tabular}}}  & \textbf{Median Abs. Dev.} & 0.1202             & 0.0683                                  & 0.0749                                   & 0.0906          & 0.0835          & 0.0937          \\
                                                                                                                 & \textbf{95\% CI Coverage} & \textbf{0.8160}    & \textbf{0.9360}                         & \textbf{0.9540}                          & \textbf{0.7860} & \textbf{0.9140} & \textbf{0.8860} \\
                                                                                                                 & \textbf{Median CI Length} & 0.4446             & 0.4093                                  & 0.4346                                   & 0.3333          & 0.4281          & 0.4388          \\
\midrule
\multirow{3}{*}{\textbf{\begin{tabular}[c]{@{}c@{}}$\boldsymbol{N=250}$\\ $\boldsymbol{T_0=1000}$\end{tabular}}} & \textbf{Median Abs. Dev.} & 0.0925             & 0.0683                                  & 0.0749                                   & 0.0735          & 0.0828          & 0.0954          \\
                                                                                                                 & \textbf{95\% CI Coverage} & \textbf{0.9120}    & \textbf{0.9360}                         & \textbf{0.9540}                          & \textbf{0.8260} & \textbf{0.9540} & \textbf{0.9380} \\
                                                                                                                 & \textbf{Median CI Length} & 0.4228             & 0.4093                                  & 0.4346                                   & 0.3196          & 0.4884          & 0.5170         \\
\bottomrule
\end{tabular}
\end{small}
\begin{tablenotes}
\footnotesize
\setlength\labelsep{0pt}
\item \emph{Notes}: LOO stands for leave-one-out procedure in cross-fitting or cross-validation in bandwidth selection. NA stands for not doing the procedure.
\end{tablenotes}
\end{threeparttable}
\end{center}
\end{adjustwidth}
\end{table}

For models 3 to 5, which involve different nonlinear data generating processes, the performance of both the proposed method and the local PCA method proposed by \cite{feng2024causal} differs across various combinations of $N$ and $T_0$. In models 3 and 4 with the nonlinear and Gaussian kernel DGPs, the proposed method provides good coverage across for different sample sizes. Specifically, we find that the cross-fitting procedure substantially improves the confidence interval coverage, and the leave-one-out procedure improves the accuracy of point estimation. Here, the proposed method either performs similarly or better than the local PCA method. In contrast, the local PCA method tends to underperform when the pre-treatment history is relatively short compared to the number of individuals. This issue gets amplified in model 5 with the exponential kernels DGP. In configurations with $N = 250$, the local PCA method requires a larger pre-treatment history of $T_0$ being $1000$ to achieve satisfactory confidence interval coverage. On the other hand, the proposed method with the leave-one-out procedure achieves effective coverage with a shorter $T_0$ requirement.

In summary, the proposed method provides a simple and transparent estimation and inference procedure for different types of causal estimands with their associated doubly robust score function. With the aforementioned algorithm, we can effectively collect individuals who are similar to each other in terms of their latent characteristics in a more transparent manner. This simulation study showcases that the proposed method can match or even outperform the workhorse two-way fixed effects model and the method proposed by \cite{feng2024causal} across different data generating processes, requiring fewer amount of pre-treatment history in certain settings.

\newpage
\addcontentsline{toc}{section}{References}
\renewcommand\refname{References}
\bibliographystyle{apalike}
\bibliography{ref}

\newpage
\appendix
\noindent{\textbf{\Large{Appendix}}}

\section{Proofs} \label{app:proof}

\subsection{Supporting Lemmas}

\begin{lemma}{A.1}[Rate of Convergence for the Pseudo Distance Estimator] \label{lem:pseudo_dist_rate}
Suppose Assumption \ref{asm:4} holds and $\sqrt{\frac{\log(N)}{T_{0}}}\to0$.
Then for all $i, j \in [N]$
\begin{align*}
\max_{i,j} \left\vert
\hat{d}_{i,j} - d_{i,j}
\right\vert
=\mathcal{O}_{p}\left(\sqrt{\frac{\log(N)}{T_{0}}}+\frac{\log(N)}{N}\right).
\end{align*}
\end{lemma}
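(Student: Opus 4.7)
The plan is to decompose the sample inner product defining $\hat d_{i,j}$ into a signal piece whose conditional expectation equals $f(\alpha_k,\alpha_i,\alpha_j):=\int \mu^{(0)}(\alpha_k,\lambda)(\mu^{(0)}(\alpha_i,\lambda)-\mu^{(0)}(\alpha_j,\lambda))\,d\pi(\lambda)$, plus three cross-product noise pieces. Since $d_{i,j}=\sup_{\alpha\in\mathcal A}|f(\alpha,\alpha_i,\alpha_j)|$ and $\hat d_{i,j}$ is a max over $k$ of the sample analogue, the uniform error naturally splits into (i) concentration of $T_0$-averages in the time dimension, delivering the $\sqrt{\log N/T_0}$ term, and (ii) approximation of $\sup_{\alpha}$ by $\max_k$ in the cross-sectional dimension, delivering the $\log N/N$ term.

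For (i), because no unit is treated before $T_0+1$, we have $Y_{i,t}=\mu^{(0)}(\alpha_i,\lambda_t)+u_{i,t}$ for $t\leq T_0$. Writing $\mu_i(\lambda):=\mu^{(0)}(\alpha_i,\lambda)$, expansion yields
\[
\frac{1}{T_0}\langle Y_{k,1:T_0},\,Y_{i,1:T_0}-Y_{j,1:T_0}\rangle = S_{ijk}+E^{(1)}_{ijk}+E^{(2)}_{ijk}+E^{(3)}_{ijk},
\]
where $S_{ijk}=\frac{1}{T_0}\sum_t\mu_k(\lambda_t)(\mu_i(\lambda_t)-\mu_j(\lambda_t))$ is the signal and the remaining pieces are the mean-zero cross products $\mu_k(\lambda_t)(u_{i,t}-u_{j,t})$, $u_{k,t}(\mu_i(\lambda_t)-\mu_j(\lambda_t))$, and $u_{k,t}(u_{i,t}-u_{j,t})$. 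Conditioning on $\{\alpha_\ell\}_{\ell\in[N]}$, each summand in $S_{ijk}$ is bounded (Assumption 4(v)) and i.i.d.\ in $t$ with mean $f(\alpha_k,\alpha_i,\alpha_j)$; Hoeffding together with a union bound over the $\mathcal{O}(N^3)$ distinct triples $(i,j,k)$ gives $\max_{i,j,k}|S_{ijk}-f(\alpha_k,\alpha_i,\alpha_j)|=\mathcal{O}_p(\sqrt{\log N/T_0})$. The three noise pieces are sub-exponential averages under the sub-Gaussianity and conditional independence in Assumption 4(iii) (the $u\cdot u$ piece being sub-exponential as a product of conditionally independent sub-Gaussians); Bernstein plus the same union bound gives $\max_{i,j,k}|E^{(l)}_{ijk}|=\mathcal{O}_p(\sqrt{\log N/T_0})$ for $l=1,2,3$.

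For (ii), the upper bound is then immediate: $\hat d_{i,j}\leq \max_k|f(\alpha_k,\alpha_i,\alpha_j)|+\mathcal{O}_p(\sqrt{\log N/T_0})\leq d_{i,j}+\mathcal{O}_p(\sqrt{\log N/T_0})$ uniformly in $(i,j)$. The lower bound requires showing that some $\alpha_{k^*}$ in the sample approximates a near-maximizer $\alpha^*_{ij}\in\mathcal A$ of $|f(\cdot,\alpha_i,\alpha_j)|$. By the density lower bound in Assumption 4(iv), $\mathbb P(\min_{k\neq i,j}\|\alpha_k-\alpha^*_{ij}\|>\delta)\leq \exp(-\underline c(N-2)\delta^{d_\alpha})$; combining a finite cover of $\mathcal A$ with a union bound over the $\mathcal{O}(N^2)$ pairs $(i,j)$ gives $\min_{k}\|\alpha_{k}-\alpha^*_{ij}\|\lesssim (\log N/N)^{1/d_\alpha}$ with high probability. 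Lipschitz continuity of $\mu^{(0)}$ in its first argument then yields $|f(\alpha_{k^*},\alpha_i,\alpha_j)-f(\alpha^*_{ij},\alpha_i,\alpha_j)|=\mathcal{O}((\log N/N)^{1/d_\alpha})$, hence $\hat d_{i,j}\geq d_{i,j}-\mathcal{O}((\log N/N)^{1/d_\alpha})-\mathcal{O}_p(\sqrt{\log N/T_0})$. In the one-dimensional case $d_\alpha=1$ this matches the stated $\log N/N$ term. The main obstacle is this covering/spacing step: making it uniform across all $\mathcal{O}(N^2)$ pairs $(i,j)$ and extracting exactly the $\log N/N$ rate from the density lower bound, whereas the noise bounds in (i) are comparatively routine applications of exponential concentration.
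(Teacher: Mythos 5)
Your proposal follows essentially the same route as the paper's proof: the same decomposition of $\frac{1}{T_0}\langle Y_{k,1:T_0},\,Y_{i,1:T_0}-Y_{j,1:T_0}\rangle$ into a signal piece plus three mean-zero cross products, Bernstein for sub-exponential noise averages with a union bound over $O(N^3)$ triples, Hoeffding for the bounded signal averages, and a covering/spacing argument for approximating $\sup_{\alpha\in\mathcal A}$ by $\max_{k}$. The paper organizes this slightly differently — it introduces an intermediate oracle quantity $\tilde d_{i,j}$ and treats the noise-vs-signal step and the signal-concentration-vs-spacing step as two stages — but this is a cosmetic repackaging of the same ingredients. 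One stylistic difference worth noting: the paper bounds $\sup_{\alpha\in\mathcal A}\min_k\|\alpha-\alpha_k\|$ directly by covering $\mathcal A$ and union-bounding over the $J\lesssim \eta^{-d_\alpha}$ covering centers; this automatically yields uniformity over all pairs $(i,j)$, so your extra union bound over the $O(N^2)$ pairs is unnecessary overhead (you never need to track which pair's near-maximizer you are approximating).

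More substantively, your observation that the spacing argument delivers $(\log N/N)^{1/d_\alpha}$ rather than $\log N/N$ is correct, and it identifies a genuine flaw in the paper's proof as written. Assumption 4(iv) only gives $\mathbb P(\|\alpha-\alpha_k\|\leq\eta)\geq\underline c\,\eta^{d_\alpha}$, hence $\mathbb P(\|\alpha-\alpha_k\|>\eta)\leq 1-\underline c\,\eta^{d_\alpha}$, but the paper's proof uses $1-\delta\eta$ with no $d_\alpha$ exponent — a typo that is harmless only when $d_\alpha=1$. With the exponent restored, the same covering calculation gives $\sup_\alpha\min_k\|\alpha-\alpha_k\|=O_p\bigl((\log N/N)^{1/d_\alpha}\bigr)$, matching your rate and not the rate stated in the lemma. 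The error is benign downstream (Lemma A.3 only needs this ratio divided by $h$ to vanish, and Theorem 1's condition $\log N/(h^{2d_\alpha}N)\to 0$ already forces $(\log N/N)^{1/d_\alpha}/h\to 0$), but you were right to flag the dimension dependence as the real content of the spacing step.
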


\begin{proof}[Proof of Lemma \ref{lem:pseudo_dist_rate}]
By Assumption \ref{asm:4} \ref{asm:4-4} we can write $d_{i,j}$ as follows:
\begin{align*}
d_{i,j} & =\sup_{\alpha_{k}\in\supp(\alpha)}\left\vert \mathbb{E}\big[\mu^{(0)}(\alpha_{k},\lambda_{t})\big(\mu^{(0)}(\alpha_{i},\lambda_{t})-\mu^{(0)}(\alpha_{j},\lambda_{t})\big)|\alpha_{i},\alpha_{j},\alpha_{k}\big]\right\vert 
\end{align*}
We further denote the oracle distance $\tilde{d}_{i,j}$, which will be helpful throughout the proof, as follows:
\begin{align*}
\tilde{d}_{i,j}
:= \max\limits_{k\notin\{i,j\}} \left\vert
\frac{1}{T_{0}}\sum_{t=1}^{T_{0}}\mu_{k,t}^{(0)}(\mu_{i,t}^{(0)}-\mu_{j,t}^{(0)})
\right\vert
\end{align*}
The proof is then separated into two parts: (i) the rate of convergence from $\hat{d}_{i,j}$ to $\tilde{d}_{i,j}$ and (ii) the rate of convergence from $\tilde{d}_{i,j}$ to $d_{i,j}$.

\paragraph{From $\hat{d}_{i,j}$ to $\tilde{d}_{i,j}$}

By applying the reverse triangle inequality and multiplying out, we derive the max difference between estimated distance and oracle distance as
\begin{align}
&\hspace{5pt} \max_{i,j} \left\vert \hat{d}_{i,j}-\tilde{d}_{i,j} \right\vert \nonumber \\
=&\hspace{5pt} \left\vert  \max\limits_{i,j,k:\,k\notin\{i,j\}}
\left\vert \frac{1}{T_{0}}\sum_{t=1}^{T_{0}}(\mu_{k,t}^{(0)}+u_{k,t})(\mu_{i,t}^{(0)}-\mu_{j,t}^{(0)}+u_{j,t}-u_{j,t}) \right\vert
- \max\limits_{k\notin\{i,j\}} \left\vert\frac{1}{T_{0}}\sum_{t=1}^{T_{0}}\mu_{k,t}^{(0)}(\mu_{i,t}^{(0)}-\mu_{j,t}^{(0)}) \right\vert
\right\vert\nonumber \\
\leq&\hspace{5pt} \max\limits_{i,j,k:\,k\notin\{i,j\}} \left\vert
\frac{1}{T_{0}} \sum_{t=1}^{T_{0}} \left[ u_{k,t}\left(\mu_{i,t}^{(0)}-\mu_{j,t}^{(0)}\right) + \mu_{k,t}^{(0)}\left(u_{j,t}-u_{j,t}\right) + u_{k,t}\left(u_{j,t}-u_{j,t}\right) \right]
\right\vert \label{eq:d_difference}
\end{align}
By Assumptions \ref{asm:4} \ref{asm:4-3}, for each $r,q\in\{i,j,k\}$ with $r\neq q$, it follows that $\mathbb{E}\left[u_{k,t}|\alpha_{i},\alpha_{j},\alpha_{k},\lambda_{t}\right]=0$ and $\mathbb{E}\left[u_{k,t}(u_{j,t}-u_{j,t})|\alpha_{i},\alpha_{j},\alpha_{k},\lambda_{t}\right]=0$. Moreover, by Assumption \ref{asm:4} \ref{asm:4-5}, $\mu_{i,t}^{(0)}$, $\mu_{j,t}^{(0)}$, and $\mu_{k,t}^{(0)}$ are bounded, and by Assumption \ref{asm:4} \ref{asm:4-3}, $\left\{(u_{i,t},u_{j,t},u_{k,t})\right\}_{t=1}^{T_{0}}$ are jointly independent over time and sub-Gaussian conditional on $\alpha_{i},\alpha_{j},\alpha_{k},\lambda_{t}$. Thus, from Bernstein's inequality in Lemma \ref{lem:bi_subexp}, we get that for constants $C$, $a$, and $b$, for any $\eta>0$, it follows that
\begin{align*}
\mathbb{P}\left(\left\vert\frac{1}{T_{0}}\sum_{t=1}^{T_{0}}\left[ (u_{k,t}(\mu_{i,t}^{(0)}-\mu_{j,t}^{(0)})+\mu_{k,t}^{(0)}(u_{j,t}-u_{j,t})+u_{k,t}(u_{j,t}-u_{j,t})\right]\right\vert>\eta
\hspace{3pt}\bigg\vert\hspace{3pt}
\alpha_{i},\alpha_{j},\alpha_{k},\lambda_{t}\right)
\leq C\exp\left(-\frac{T_{0}\eta^{2}}{a+b\eta}\right)
\end{align*}
And so, applying the law of iterated expectations gives
\begin{align*}
\mathbb{P}\left(\left\vert\frac{1}{T_{0}}\sum_{t=1}^{T_{0}}\left[ (u_{k,t}(\mu_{i,t}^{(0)}-\mu_{j,t}^{(0)})+\mu_{k,t}^{(0)}(u_{j,t}-u_{j,t})+u_{k,t}(u_{j,t}-u_{j,t})\right]\right\vert>\eta\right)
\leq C\exp\left(-\frac{T_{0}\eta^{2}}{a+b\eta}\right)
\end{align*}
By taking the union bound and (\ref{eq:d_difference}), we then get
\begin{align*}
\mathbb{P}\left(\max_{i,j} \left\vert \hat{d}_{i,j}-\tilde{d}_{i,j} \right\vert > \eta\right)
&\leq \begin{pmatrix} N\\ 3 \end{pmatrix} C\exp\left(-\frac{T_{0}\eta^{2}}{a+b\eta}\right)\\
&\leq N^{3}C\exp\left(-\frac{T_{0}\eta^{2}}{a+b\eta}\right)
\end{align*}
By setting $\eta=\sqrt{\dfrac{3a\log(N)}{T_{0}}}$ in the above inequality, it follows that
\begin{align*}
\mathbb{P}\left(\max_{i,j} \left\vert \hat{d}_{i,j}-\tilde{d}_{i,j} \right\vert > \sqrt{\frac{a\log(N)}{T_{0}}}\right)
\leq C\exp\left( 3 \log(N)-\frac{3a\log(N)}{a+b\sqrt{\frac{a\log(N)}{T_{0}}}} \right)
\end{align*}
As the RHS goes to $C$, hence, we conclude the rate for convergence from $\hat{d}_{i,j}$ to $\tilde{d}_{i,j}$ being
\begin{align*}
\max_{i,j} \left\vert \hat{d}_{i,j}-\tilde{d}_{i,j} \right\vert
= \mathcal{O}_{p}\left(\sqrt{\frac{\log(N)}{T_{0}}}\right)
\end{align*}

\paragraph{From $\tilde{d}_{i,j}$ to $d_{i,j}$}

By applying the triangle inequality and reverse triangle inequality, we derive the max difference between oracle distance and actual distance as
\begin{align}
&\hspace{5pt} \max\limits_{i,j,k:\,k\notin\{i,j\}} \left\vert d_{i,j}-\tilde{d}_{i,j} \right\vert \nonumber \\
\leq&\hspace{5pt} \max\limits_{i,j,k:\,k\notin\{i,j\}} \left\vert \frac{1}{T_{0}}\sum_{t=1}^{T_{0}}\mu_{k,t}^{(0)} \left(\mu_{i,t}^{(0)}-\mu_{j,t}^{(0)}\right) - \mathbb{E}\left[\mu_{k,t}^{(0)}\left(\mu_{i,t}^{(0)} - \mu_{j,t}^{(0)}\right) \hspace{3pt}\big\vert\hspace{3pt} \alpha_{i},\alpha_{j},\alpha_{k}\right] \right\vert \label{eq:muline1}\\
+&\hspace{5pt} \max\limits_{i,j}\sup_{\alpha\in\supp(\alpha)} \left\vert\mathbb{E}\left[\mu^{(0)}(\alpha,\lambda_{t})\left(\mu_{i,t}^{(0)}-\mu_{j,t}^{(0)}\right) \hspace{3pt}\big\vert\hspace{3pt} \alpha_{i},\alpha_{j}\right] \right\vert
- \max\limits_{i,j,k:\,k\notin\{i,j\}} \left\vert \mathbb{E}\left[\mu_{k,t}^{(0)}\left(\mu_{i,t}^{(0)}-\mu_{j,t}^{(0)}\right) \hspace{3pt}\big\vert\hspace{3pt} \alpha_{i},\alpha_{j},\alpha_{k} \right]\right\vert \label{eq:muline2}
\end{align}
We first derive the rate for (\ref{eq:muline1}). By Assumptions \ref{asm:4} \ref{asm:4-4} and \ref{asm:4-5}, it follows that $\mu_{i,t}^{(0)}$, $\mu_{j,t}^{(0)}$, and $\mu_{k,t}^{(0)}$ are bounded and independent over time conditional on $\alpha_{i},\alpha_{j},\alpha_{k}$. Then by Hoeffding's inequality, there are constants $C$, $a$, and $b$ such that for any $\eta>0$,
\begin{align*}
\mathbb{P}\left(\left\vert\frac{1}{T_{0}}\sum_{t=1}^{T_{0}}\mu_{k,t}^{(0)}\left(\mu_{i,t}^{(0)}-\mu_{j,t}^{(0)}\right) - \mathbb{E}\left[\mu_{k,t}^{(0)}\left(\mu_{i,t}^{(0)}-\mu_{j,t}^{(0)}\right) \hspace{3pt}\big\vert\hspace{3pt} \alpha_{i},\alpha_{j},\alpha_{k}\right]\right\vert
\hspace{3pt}\bigg\vert\hspace{3pt}
\alpha_{i},\alpha_{j},\alpha_{k}\right)
\leq C\exp\left(-\frac{T_{0}\eta^{2}}{a+b\eta}\right)
\end{align*}
which further leads to
\begin{align*}
\mathbb{P}\left(\left\vert\frac{1}{T_{0}}\sum_{t=1}^{T_{0}}\mu_{k,t}^{(0)}\left(\mu_{i,t}^{(0)}-\mu_{j,t}^{(0)}\right) - \mathbb{E}\left[\mu_{k,t}^{(0)}\left(\mu_{i,t}^{(0)}-\mu_{j,t}^{(0)}\right) \hspace{3pt}\big\vert\hspace{3pt} \alpha_{i},\alpha_{j},\alpha_{k}\right]\right\vert\right)
\leq C\exp\left(-\frac{T_{0}\eta^{2}}{a+b\eta}\right)
\end{align*}
Then applying union bound gives
\begin{align*}
\mathbb{P}\left(\max\limits_{i,j,k:\,k\notin\{i,j\}}\left\vert\frac{1}{T_{0}}\sum_{t=1}^{T_{0}}\mu_{k,t}^{(0)}\left(\mu_{i,t}^{(0)}-\mu_{j,t}^{(0)}\right) - \mathbb{E}\left[\mu_{k,t}^{(0)}\left(\mu_{i,t}^{(0)}-\mu_{j,t}^{(0)}\right) \hspace{3pt}\big\vert\hspace{3pt} \alpha_{i},\alpha_{j},\alpha_{k}\right]\right\vert\right)
\leq N^{3}C\exp\left(-\frac{T_{0}\eta^{2}}{a+b\eta^ {}}\right)
\end{align*}
By setting $\eta=\sqrt{\frac{3a\log(N)}{T_{0}}}$ in the above inequality, it follows that
\begin{align*}
\max\limits_{i,j,k:\,k\notin\{i,j\}} \left\vert \frac{1}{T_{0}}\sum_{t=1}^{T_{0}}\mu_{k,t}^{(0)}\left(\mu_{i,t}^{(0)}-\mu_{j,t}^{(0)}\right)-\mathbb{E}\big[\mu_{k,t}^{(0)}\left(\mu_{i,t}^{(0)}-\mu_{j,t}^{(0)}\right) \hspace{3pt}\big\vert\hspace{3pt} \alpha_{i},\alpha_{j},\alpha_{k}] \right\vert
= \mathcal{O}_{p}\left(\sqrt{\frac{\log(N)}{T_{0}}}\right)
\end{align*}

Then we derive the rate of convergence for (\ref{eq:muline2}). By the reverse triangle inequality, the boundedness of $\mu_{i,t}^{(0)}$, and the Lipschitz condition in Assumption \ref{asm:4} \ref{asm:4-5}, we have that, for some constant $C$,
\begin{align*}
&\hspace{5pt} \max\limits_{i,j}\sup_{\alpha\in\supp(\alpha)} \left\vert\mathbb{E}\left[\mu^{(0)}(\alpha,\lambda_{t})\left(\mu_{i,t}^{(0)}-\mu_{j,t}^{(0)}\right) \hspace{3pt}\big\vert\hspace{3pt} \alpha_{i},\alpha_{j} \right] \right\vert
- \max\limits_{i,j,k:\,k\notin\{i,j\}}\left\vert\mathbb{E}\left[\mu_{k,t}^{(0)}\left(\mu_{i,t}^{(0)}-\mu_{j,t}^{(0)}\right) \hspace{3pt}\big\vert\hspace{3pt} \alpha_{i},\alpha_{j},\alpha_{k} \right]\right\vert \\
\leq&\hspace{5pt} \max\limits_{i,j}\sup_{\alpha\in\supp(\alpha)}\min_{k\notin\{i,j\}} \left\vert \mathbb{E}\left[ \left(\mu^{(0)}(\alpha,\lambda_{t})-\mu^{(0)}(\alpha_{k},\lambda_{t})\right) \left(\mu_{i,t}^{(0)}-\mu_{j,t}^{(0)}\right) \hspace{3pt}\big\vert\hspace{3pt} \alpha_{i},\alpha_{j},\alpha_{k} \right] \right\vert \\
\leq&\hspace{5pt} C\sup_{\alpha\in\supp(\alpha)}\min_{k} \Vert \alpha-\alpha_{k} \Vert
\end{align*}
By the compactness of the support of $\alpha$ in Assumption \ref{asm:4} \ref{asm:4-4}, with a constant $C$, there is an $\eta$-covering of the support of $\alpha_{i}$ that consists of $J\leq\frac{C}{\eta^{d_\alpha}}$ open balls whose centers we  will denote by $\left\{\bar{\alpha}_{j}\right\}_{j=1}^{J}$. Then by definition of the covering, it follows that
\begin{align*}
\sup_{\alpha\in\supp(\alpha)}\min_{k} \Vert \alpha-\alpha_{k} \Vert
\leq \max_{j\in[J]}\min_{k\in[N]} \Vert \bar{\alpha}_{j}-\alpha_{k} \Vert + L_{0} \eta
\end{align*}
Then applying the union bound gives
\begin{align*}
\mathbb{P}\left( \max_{j\in[J]}\min_{k\in[N]} \Vert \bar{\alpha}_{j}-\alpha_{k} \Vert > \epsilon \right)
\leq\sum_{j=1}^{J} \mathbb{P} \left( \min_{k\in[N]} \Vert \bar{\alpha}_{j}-\alpha_{k} \Vert > \epsilon \right)
\end{align*}
By Assumption \ref{asm:4} \ref{asm:4-4}, there is a $0<\delta$ so that for any fixed $\alpha\in\supp(\alpha)$ and any $\eta$, $\mathbb{P}(\Vert \alpha-\alpha_{k} \Vert > \eta ) \leq 1-\delta \eta$. With the independence of $\{\alpha_{k}\}_{k=1}^{N}$, we have
\begin{align*}
\mathbb{P}\left(\min_{k\in[N]} \Vert \bar{\alpha}_{j}-\alpha_{k} \Vert > \epsilon\right)
= \left\{\mathbb{P}\left( \Vert \bar{\alpha}_{j}-\alpha_{k} \Vert > \epsilon\right)\right\}^N
\leq (1-\delta\epsilon)^{N}
\end{align*}
which futher leads to
\begin{align*}
\mathbb{P}\left(\max_{j\in[J]} \min_{k\in[N]} \Vert \bar{\alpha}_{j}-\alpha_{k} \Vert > \epsilon\right)
\leq J\delta^{N}\epsilon^{N}
\leq\frac{C}{\eta^{d_\alpha}}(1-\delta\epsilon)^{N}
\end{align*}
By setting $\epsilon=\frac{d_\alpha\log(N)}{\delta N}$ and $\eta=\frac{1}{N}$, it follows that
\begin{align*}
\mathbb{P}\left(\max_{j\in[J]} \min_{k\in[N]} \Vert \bar{\alpha}_{j}-\alpha_{k} \Vert > \frac{\log(N)}{\delta N} \right)
\leq J\delta^{N}\epsilon^{N}
\leq\frac{C}{N^{d_\alpha}}\left(1-\frac{d_\alpha\log(N)}{N}\right)^{N} \to C
\end{align*}
Hence, we have
\begin{align*}
\sup_{\alpha\in\supp(\alpha)}\min_{k} \Vert \alpha-\alpha_{k} \Vert
= \mathcal{O}_{p}\left(\frac{\log(N)}{N}\right)
\end{align*}

Combining with previous derivation, therefore, we conclude the rate for convergence from $\tilde{d}_{i,j}$ to $d_{i,j}$ being
\begin{align*}
\max_{i,j} \left\vert \hat{d}_{i,j} - d_{i,j} \right\vert
= \mathcal{O}_{p}\left(\sqrt{\frac{\log(N)}{T_{0}}} + \frac{\log(N)}{N}\right)
\end{align*}
\end{proof}


\begin{lemma}{A.2} \label{lem:dist_char}
Under Assumptions \ref{asm:4} \ref{asm:4-5} and \ref{asm:4-8}, there exists a $c>0$ so that for all $\alpha_{1},\alpha_{2}\in\mathcal{A}$:
\[
\sup_{\alpha\in\supp(\alpha)}\int_{\lambda\in\supp(\lambda)}\mu^{(0)}(\alpha_{1},\lambda)\big(\mu^{(0)}(\alpha_{1},\lambda)-\mu^{(0)}(\alpha_{2},\lambda)\big)\dd\pi(\lambda)\geq c\|\alpha_{1}-\alpha_{2}\|
\]
\end{lemma}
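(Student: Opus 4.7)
The plan is to prove this by directly chaining the two inequalities in Assumption \ref{asm:4} \ref{asm:4-8}. Note that the integrand in the statement of Lemma A.2 must intend $\mu^{(0)}(\alpha, \lambda)$ rather than $\mu^{(0)}(\alpha_1, \lambda)$ — otherwise the supremum over $\alpha$ is vacuous and the expression collapses to something that does not, in general, satisfy the claimed bound. With that identification, the statement of the lemma is precisely the composition of equations (\ref{closeness}) and (\ref{lowB}).

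First, I would apply inequality (\ref{closeness}) to conclude that
\[
\sup_{\alpha \in \supp(\alpha)}\int \mu^{(0)}(\alpha,\lambda)\bigl( \mu^{(0)}(\alpha_1,\lambda) - \mu^{(0)}(\alpha_2,\lambda)\bigr)\dd\pi(\lambda) \;\geq\; c_2 \sqrt{\int \bigl(\mu^{(0)}(\alpha_1,\lambda) - \mu^{(0)}(\alpha_2,\lambda)\bigr)^2 \dd\pi(\lambda)}.
\]
Next, I would invoke inequality (\ref{lowB}), which gives $\int(\mu^{(0)}(\alpha_1,\lambda) - \mu^{(0)}(\alpha_2,\lambda))^2 \dd\pi(\lambda) \geq c_1\|\alpha_1-\alpha_2\|^2$. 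Taking square roots of both sides (both are nonnegative) yields $\sqrt{\int(\mu^{(0)}(\alpha_1,\lambda) - \mu^{(0)}(\alpha_2,\lambda))^2 \dd\pi(\lambda)} \geq \sqrt{c_1}\,\|\alpha_1-\alpha_2\|$.

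Finally, substituting this lower bound into the previous display yields the conclusion with constant $c := c_2\sqrt{c_1}$, which is strictly positive because $c_1, c_2 > 0$ by assumption. There is no real obstacle here: the lemma is essentially a one-line consequence of how the two parts of Assumption \ref{asm:4} \ref{asm:4-8} are designed to fit together, and its purpose is presumably to repackage the composite inequality in a convenient Lipschitz-type form for later use in controlling the pseudo-distance in terms of $\|\alpha_i - \alpha_j\|$.
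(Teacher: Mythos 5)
Your proof is correct and takes the same approach as the paper's: apply (\ref{closeness}) to lower-bound the supremum by $c_2$ times the $L^2$ distance, then apply (\ref{lowB}) after taking a square root to lower-bound that by $\sqrt{c_1}\,\|\alpha_1-\alpha_2\|$, giving $c = c_2\sqrt{c_1}$. You also correctly flag the typo: the displayed integrand should read $\mu^{(0)}(\alpha,\lambda)$ rather than $\mu^{(0)}(\alpha_1,\lambda)$, which is what the paper's own proof and its subsequent application in Lemma \ref{lem:dist_kernel_bound} actually use.
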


\begin{proof}[Proof of Lemma \ref{lem:dist_char}]
By Assumption \ref{asm:4} \ref{asm:4-8}, for some $C>0$ we have:
\begin{align*}
&\hspace{3pt} C\sqrt{\int_{\lambda\in\supp(\lambda)}\left(\mu^{(0)}(\alpha_{1},\lambda)-\mu^{(0)}(\alpha_{2},\lambda)\right)^{2}\dd\pi(\lambda)}\\
\leq&\hspace{3pt}  \sup_{\alpha\in\supp(\alpha)}\int_{\lambda\in\supp(\lambda)}\mu^{(0)}(\alpha,\lambda)\left(\mu^{(0)}(\alpha_{1},\lambda)-\mu^{(0)}(\alpha_{2},\lambda)\right)\dd\pi(\lambda)
\end{align*}
and the object on the LHS is bounded below by $C\|\alpha_1-\alpha_2\|$. Thus letting $c=C^2>0$ gives the result.
\end{proof}


\begin{lemma}{A.3} \label{lem:dist_kernel_bound}
There exists a $\delta>0$ so that with probability approaching $1$,
for all $i,j\in[N]$:
\begin{align*}
K\left(\frac{\hat{d}_{i,j}}{h}\right)
= \mathbf{1} \left\{\Vert\alpha_{i}-\alpha_{j}\Vert \leq \delta h \right\} K\left(\frac{\hat{d}_{i,j}}{h}\right)
\end{align*}
\end{lemma}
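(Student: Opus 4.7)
The plan is to combine the compact support of the kernel (Assumption \ref{asm:4}\ref{asm:4-6}) with the uniform consistency of $\hat{d}_{i,j}$ from Lemma \ref{lem:pseudo_dist_rate} and the lower bound $d_{i,j} \geq c \| \alpha_i - \alpha_j \|$ from Lemma \ref{lem:dist_char}. The identity to be shown is one-sided: whenever $K(\hat{d}_{i,j}/h)$ is nonzero, we must have $\| \alpha_i - \alpha_j \| \leq \delta h$; on the complementary event both sides vanish trivially. So I only have to control the event $\{K(\hat{d}_{i,j}/h) \neq 0\}$.

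First I would invoke Assumption \ref{asm:4}\ref{asm:4-6}: since $K$ is supported on a compact set, there is some constant $M < \infty$ such that $K(z) = 0$ for all $z > M$. Consequently, $K(\hat{d}_{i,j}/h) \neq 0$ forces $\hat{d}_{i,j} \leq M h$.

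Next, by Lemma \ref{lem:pseudo_dist_rate}, uniformly in $i,j$,
\[
d_{i,j} \leq \hat{d}_{i,j} + O_p\bigl(\sqrt{\log(N)/T_0} + \log(N)/N\bigr).
\]
Under the bandwidth conditions of Theorem \ref{thm:1}, namely $\frac{1}{h}\sqrt{\log(N)/T_0} \to 0$ and $\log(N)/(h^{d_\alpha} N) \to 0$ (which in particular implies $\log(N)/(hN) = o(1)$), this remainder is $o_p(h)$. Therefore, with probability approaching one, $d_{i,j} \leq M h + o_p(h)$ uniformly in $i,j$ whenever $K(\hat{d}_{i,j}/h) \neq 0$. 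Applying Lemma \ref{lem:dist_char} to bound $d_{i,j}$ below by $c \| \alpha_i - \alpha_j \|$ then yields
\[
\| \alpha_i - \alpha_j \| \leq (M/c)\, h + o_p(h)
\]
uniformly in $i,j$ on the same event. Hence any fixed $\delta > M/c$ (e.g.\ $\delta = 2M/c$) works with probability approaching one, establishing the claim.

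There is no substantive obstacle in this argument; the only point requiring care is verifying that the rate from Lemma \ref{lem:pseudo_dist_rate} is genuinely negligible relative to $h$, which amounts to a direct check against the bandwidth conditions imposed in Theorem \ref{thm:1}. The lemma is essentially a packaging step that makes the effective support of the kernel (in terms of the unobserved latent distance $\| \alpha_i - \alpha_j \|$) shrink at rate $h$, which is exactly what is needed to analyze the kernel smoothers as if the latent factors were observed up to a constant distortion.
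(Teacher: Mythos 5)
Your proof is correct and is essentially the same argument as the paper's: both rely on the compact support of the kernel, the lower bound $d_{i,j} \geq c\|\alpha_i - \alpha_j\|$ from Lemma \ref{lem:dist_char}, and the uniform rate from Lemma \ref{lem:pseudo_dist_rate} being $o_p(h)$ under the stated bandwidth conditions. The only cosmetic difference is that you argue forward from the event $\{K(\hat{d}_{i,j}/h) \neq 0\}$, whereas the paper argues via the contrapositive from $\{\|\alpha_i - \alpha_j\| > \delta h\}$.
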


\begin{proof}[Proof of Lemma \ref{lem:dist_kernel_bound}]

By Lemma \ref{lem:dist_char}, there exists a constant $C$, such that for any finite $\delta>0$, we have
\begin{align*}
\Vert\alpha_{i}-\alpha_{j}\Vert > \delta h\implies d_{i,j}>2C\delta h
\end{align*}
Now, note that for all $i,j$,
\begin{align*}
\frac{\hat{d}_{i,j}}{h}\geq & \frac{d_{i,j}}{h}-\frac{\max_{i,j} \left\vert \hat{d}_{i,j}-d_{i,j} \right\vert}{h}.
\end{align*}
Thus, it follows that
\begin{align*}
\Vert\alpha_{i}-\alpha_{j}\Vert > \delta h
\implies
\frac{\hat{d}_{i,j}}{h} > \delta C - \frac{\max_{i,j} \left\vert\hat{d}_{i,j}-d_{i,j}\right\vert}{h}
\end{align*}
By Lemma \ref{lem:pseudo_dist_rate},  we have
\begin{align*}
\frac{\max_{i,j} \left\vert\hat{d}_{i,j}-d_{i,j}\right\vert}{h}
= \mathcal{O}_{p}\left(\frac{1}{h}\sqrt{\frac{\log(N)}{T_{0}}}+\frac{\log(N)}{hN}\right)
\end{align*}
By supposition, $\frac{1}{h}\sqrt{\frac{\log(N)}{T_{0}}}$ and $\frac{\log(N)}{hN}$ go to zero, and so with probability approaching one, we have $\frac{\max_{i,j} \left\vert\hat{d}_{i,j}-d_{i,j}\right\vert}{h} \leq \delta C$. Hence, with probability approaching one, for all $i,j$, it follows that
\begin{align*}
\Vert\alpha_{i}-\alpha_{j}\Vert > \delta h
\implies
\frac{\hat{d}_{i,j}}{h} > \delta C
\end{align*}
As Assumption \ref{asm:4} \ref{asm:4-6} states that the kernel is zero everywhere other than some compact (and therefore bounded) sets. Hence, there is a constant $c$ such that
\begin{align*}
\frac{\hat{d}_{i,j}}{h} > c
\implies
K\left(\frac{\hat{d}_{i,j}}{h}\right) = 0
\end{align*}
Finally, setting $\delta>\frac{c}{C}$ gives the desired result.
\end{proof}


\begin{lemma}{A.4} \label{lem:char_dist_bound}
Suppose Assumption \ref{asm:4} holds, and $\frac{\log N}{Nh^{2d_\alpha}} \to 0$. Then for any fixed $\delta$, there are constants $0<\underline{c}<\bar{c}<\infty$ and $0<\tilde{c}$, such that with probability approaching one and for all $i\in[N]$, it follows that
\begin{align*}
\underline{c}h^{d_{\alpha}}
\leq \frac{1}{N}\sum_{j;w_{j,t}=0} \mathbf{1}\left\{\Vert \alpha_{i} - \alpha_{j} \Vert \leq \delta h\right\}
\leq \bar{c} h^{d_{\alpha}}
\end{align*}
and
\begin{align*}
\tilde{c}\leq\frac{1}{Nh^{d_{\alpha}}}\sum_{j;w_{j,t}=0} K(d_{i,j}/h)
\end{align*}
\end{lemma}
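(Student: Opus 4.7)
The plan is to prove the two displayed bounds separately, both relying on uniform concentration of indicator/kernel sums combined with the overlap and Lipschitz assumptions.

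For the first bound, I will fix $i$ and define $X_{ij} := \mathbf{1}\{w_{j,t}=0\}\,\mathbf{1}\{\|\alpha_i - \alpha_j\| \leq \delta h\}$ for $j \neq i$. By Assumption \ref{asm:4} \ref{asm:4-4}, conditional on $\alpha_i$ these random variables are i.i.d.\ across $j$, bounded in $[0,1]$, and their common conditional mean is
\[
m_i(h) := \mathbb{E}\big[(1 - p_t(\alpha_j))\,\mathbf{1}\{\|\alpha_i-\alpha_j\| \leq \delta h\}\,\big|\,\alpha_i\big].
\]
Combining the overlap condition in Assumption \ref{asm:4} \ref{asm:4-2} (so that $1-p_t(\alpha)$ is bounded both away from $0$ and from $1$) with the ball-probability bounds in Assumption \ref{asm:4} \ref{asm:4-4} yields constants $c_1,c_2>0$ with $c_1 h^{d_\alpha} \leq m_i(h) \leq c_2 h^{d_\alpha}$ almost surely. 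Using $\operatorname{Var}(X_{ij}\mid\alpha_i) \leq m_i(h) = O(h^{d_\alpha})$, Bernstein's inequality gives, for every $\eta>0$,
\[
\mathbb{P}\bigg(\Big|\tfrac{1}{N}\sum_{j\neq i} X_{ij} - m_i(h)\Big| > \eta \,\Big|\, \alpha_i\bigg) \leq 2\exp\bigg(-\tfrac{C N \eta^2}{h^{d_\alpha}+\eta}\bigg).
\]
Picking $\eta = A\sqrt{h^{d_\alpha}\log(N)/N}$ for $A$ sufficiently large, taking a union bound across the $N$ choices of $i$, and integrating over $\alpha_i$, shows that the deviation is $o(h^{d_\alpha})$ uniformly in $i$ with probability tending to one, since $\eta = h^{d_\alpha}\sqrt{A\log(N)/(Nh^{d_\alpha})} = o(h^{d_\alpha})$ by the hypothesis $\log(N)/(Nh^{2d_\alpha})\to 0$. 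Sandwiching $\tfrac{1}{N}\sum_j X_{ij}$ around $m_i(h)$ then gives the two-sided bound with $\underline{c} = c_1/2$ and $\bar{c}= 2c_2$.

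For the second bound, I will use the Lipschitz continuity of $\mu^{(0)}$ (Assumption \ref{asm:4} \ref{asm:4-5}) and the boundedness of $\mu^{(0)}$ to obtain $d_{i,j} \leq M L_0 \|\alpha_i - \alpha_j\|$ for a constant $M$. Since $K$ is Lipschitz and strictly positive at $0$ (Assumption \ref{asm:4} \ref{asm:4-6}), there exist $\eta_0>0$ and $\kappa>0$ such that $K(x)\geq \kappa$ for all $x\in[0,\eta_0]$. Choosing $\delta'>0$ small enough that $ML_0\delta' \leq \eta_0$, we have $\|\alpha_i-\alpha_j\|\leq \delta' h \Rightarrow d_{i,j}/h \leq \eta_0 \Rightarrow K(d_{i,j}/h)\geq \kappa$. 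Therefore
\[
\frac{1}{Nh^{d_\alpha}}\sum_{j;\,w_{j,t}=0} K(d_{i,j}/h) \;\geq\; \frac{\kappa}{Nh^{d_\alpha}}\sum_{j;\,w_{j,t}=0}\mathbf{1}\{\|\alpha_i - \alpha_j\|\leq \delta' h\} \;\geq\; \kappa\,\underline{c}
\]
uniformly in $i$ with probability approaching one, by applying the already-proven first part of the lemma with $\delta$ replaced by the fixed constant $\delta'$.

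The main obstacle will be the variance calibration in the Bernstein step: the variance of $X_{ij}$ is of the same order as its mean $h^{d_\alpha}$, so a crude Hoeffding-type bound would only yield a deviation of order $\sqrt{\log(N)/N}$, which is too large to be $o(h^{d_\alpha})$. It is precisely the sharper $h^{d_\alpha}$ variance proxy in Bernstein, together with the hypothesis $\log(N)/(Nh^{2d_\alpha})\to 0$, that produces a deviation rate faster than $h^{d_\alpha}$ and hence a strictly positive lower constant $\underline{c}$; everything else is standard nonparametric-kernel book-keeping.
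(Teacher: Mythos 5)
Your proposal is correct and follows essentially the same route as the paper: for the first bound, compute the conditional mean via the overlap condition and the ball-probability bounds, concentrate the sum of indicators, and union-bound over $i$; for the second bound, pass from the Lipschitz/boundedness of $\mu^{(0)}$ to $d_{i,j}\leq C\|\alpha_i-\alpha_j\|$, use that $K$ is positive and Lipschitz near zero, and invoke the first part with a suitably small fixed radius. The only divergence is Bernstein versus Hoeffding in the concentration step.

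However, the claim in your last paragraph that ``a crude Hoeffding-type bound would only yield a deviation of order $\sqrt{\log(N)/N}$, which is too large to be $o(h^{d_\alpha})$'' is incorrect under the lemma's own hypothesis. Since $\tfrac{\log N}{Nh^{2d_\alpha}}\to 0$, we have
\[
\sqrt{\tfrac{\log N}{N}} \;=\; h^{d_\alpha}\sqrt{\tfrac{\log N}{N h^{2d_\alpha}}} \;=\; o\!\left(h^{d_\alpha}\right),
\]
so Hoeffding for bounded (here Bernoulli) variables suffices, and this is exactly what the paper does; the hypothesis $\tfrac{\log N}{Nh^{2d_\alpha}}\to 0$ is precisely the rate that Hoeffding demands. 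Your Bernstein argument with variance proxy $O(h^{d_\alpha})$ is valid and in fact sharper—it would succeed under the weaker requirement $\tfrac{\log N}{Nh^{d_\alpha}}\to 0$—but the obstacle you flag is not an obstacle given the lemma's stated assumption, and you should not present Bernstein as necessary.
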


\begin{proof}[Proof of Lemma \ref{lem:char_dist_bound}]

For the first statement in Lemma \ref{lem:char_dist_bound}, by Assumption \ref{asm:4} \ref{asm:4-2}, for some $c>0$, it follows that $\mathbb{P}(1-w_{j,t}|\alpha_{j})\geq c$ almost surely. And with the fact that $w_{j,t}$ is independent of $\alpha_{i}$ and that $\{\alpha_{j}\}_{j=1}^{N}$ are i.i.d. (by Assumption \ref{asm:4} \ref{asm:4-4}), we have
\begin{align*}
\mathbb{E} \left[\frac{1}{N}\sum_{j}(1-w_{j,t}) \mathbf{1} \left\{\Vert\alpha_{i}-\alpha_{j}\Vert\leq\delta h \right\} \hspace{3pt}\big\vert\hspace{3pt} \alpha_{i}\right]
& = \mathbb{E}\left[\frac{1}{N}\sum_{j}P(1-w_{j,t}|\alpha_{j}) \mathbf{1} \left\{\vert\alpha_{i}-\alpha_{j}\vert\leq\delta h\right\} \hspace{3pt}\big\vert\hspace{3pt} \alpha_{i}\right]\\
& \geq c \cdot \mathbb{E}\left[\frac{1}{N}\sum_{j} \mathbf{1} \left\{\Vert\alpha_{i}-\alpha_{j}\Vert \leq \delta h \right\} \hspace{3pt}\big\vert\hspace{3pt} \alpha_{i}\right]\\
&= c \cdot \mathbb{P}\left(\Vert\alpha_{i}-\alpha_{j}\Vert\leq\delta h \hspace{3pt}\big\vert\hspace{3pt} \alpha_{i}\right)
\end{align*}
With Assumption \ref{asm:4} \ref{asm:4-4}, for any fixed $\delta$, there are constants $0<\underline{c}<\bar{c}<\infty$ such that
\begin{align*}
2 \underline{c} h^{d_{\alpha}}
\leq \mathbb{P}\left(\Vert\alpha_{i}-\alpha_{j}\Vert\leq\delta h \hspace{3pt}\big\vert\hspace{3pt} \alpha_{i}\right)
\leq \frac{1}{2}\bar{c}h^{d_{\alpha}}
\end{align*}
By applying the union bound and Hoeffding's inequality for Bernoulli random variables, it follows that, for any $C$,
\begin{align*}
&\hspace{5pt} \mathbb{P}\left(\max_{i\in[N]}\left\vert\frac{1}{N}\sum_{j}(1-w_{j,t}) \cdot \mathbf{1}\left\{\Vert \alpha_{i}-\alpha_{j}\Vert \leq \delta h \right\}
- \mathbb{E}\left[\frac{1}{N}\sum_{j}(1-w_{j,t}) \cdot \mathbf{1}\left\{\Vert\alpha_{i}-\alpha_{j}\Vert \leq \delta h\right\} \hspace{3pt}\big\vert\hspace{3pt} \alpha_{i}\right]\right\vert > C h^{d_{\alpha}}\right)\\
\leq&\hspace{5pt} 2N \cdot \exp\left(-2NC^{2}h^{2}\right)\\
=&\hspace{5pt} 2 \cdot \exp\left(\log(N)-2NC^{2}h^{2d_{\alpha}}\right)
\end{align*}
The RHS goes to zero if $\frac{\log N}{Nh^{2d_{\alpha}}}\to0$, which holds by supposition. By setting $C=\underline{c}$ and $C=\frac{1}{2}\bar{c}$, we see that with probability approaching one, for all $i\in[N]$, it follows that
\begin{align*}
\underline{c}h^{d_{\alpha}}
\leq \frac{1}{N}\sum_{j;w_{j,t}=0}(1-w_{j,t}) \mathbf{1} \left\{\Vert\alpha_{i}-\alpha_{j}\Vert \leq \delta h \right\}
\leq \bar{c}h^{d_{\alpha}}
\end{align*}

Now for the second statement in Lemma \ref{lem:char_dist_bound}, by Assumption \ref{asm:4} \ref{asm:4-5}, it follows that $\mu(\cdot,\lambda)$
is Lipschitz continuous and bounded (uniformly over $\lambda$ in
both cases). Thus, for some constant $C<\infty$, we have
\begin{align*}
\Vert\alpha_{i}-\alpha_{j}\Vert\leq x
\implies
d_{i,j}\leq Cx
\end{align*}
By Assumption \ref{asm:4} \ref{asm:4-6} of the kernel being positive and strictly positive at zero, and alongside with Lipschitz continuous, thus, there must be some $a,b>0$ such that for all $x\leq a$, $K(x)\geq b$. Combining these results gives
\begin{align*}
\Vert\alpha_{i}-\alpha_{j}\Vert\leq\frac{a}{C}h
\implies
d_{i,j}\leq ah
\implies
K(d_{i,j}/h)\geq b
\end{align*}
Then with the positivity of the kernel, it follows that
\begin{align*}
b \cdot \frac{1}{N} \sum_{j}(1-w_{j,t}) \mathbf{1} \left\{\Vert\alpha_{i}-\alpha_{j}\Vert \leq \frac{a}{C} \cdot h \right\}
\leq \frac{1}{N}\sum_{j}(1-w_{j,t}) \cdot K(d_{i,j}/h)
\end{align*}
But since we already established that for any fixed $\delta$, there is
a constant $\underline{c}>0$ such that $\underline{c}h^{d_\alpha}\leq\frac{1}{N}\sum_{j}(1-w_{j,t}) \mathbf{1} \left\{\Vert\alpha_{i}-\alpha_{j}\Vert \leq \delta h \right\}$ for all $i\in[N]$ with probability approaching one, therefore, there is a constant $\tilde{c}>0$ such that with probability approaching one, for all $i$, 
\begin{align*}
\tilde{c}\leq\frac{1}{Nh^{d_{\alpha}}}\sum_{j}(1-w_{j,t}) \cdot K(d_{i,j}/h)
\end{align*}
\end{proof}

\subsection{Proof of Theorem \ref{thm:1}}

\begin{proof}[Proof of Theorem \ref{thm:1}]
We prove the result for $\hat{\mu}^{(0)}_{i,t}$. The result for $\hat{p}_{i,t}$ follows by almost identical steps. 
By Lemma \ref{lem:dist_kernel_bound}, there exists a $\delta < \infty$ such that with probability approaching one, for all $i,j\in[N]$, we have
\begin{align*}
K\left(\hat{d}_{i,j} / h\right) 
= K \left(\hat{d}_{i,j}/h\right) \cdot \mathbf{1} \left\{ \Vert \alpha_{i}-\alpha_{j} \Vert \leq\delta h \right\}
\end{align*}
And with the above equality holds, we can then decompose $\hat{\mu}_{i,t}^{(0)}-\mu_{i,t}^{(0)}$
as follows:
\begin{align}
\hat{\mu}_{i,t}^{(0)}-\mu_{i,t}^{(0)}
&= \frac{\sum\limits_{j;w_{j,t}=0} \mathbf{1}\left\{ \Vert \alpha_{i}-\alpha_{j} \Vert \leq \delta h\right\} K\left(\hat{d}_{i,j}/h\right) \cdot \mu_{j,t}^{(0)}}{\sum\limits_{j;w_{j,t}=0}\mathbf{1}\left\{ \Vert \alpha_{i}-\alpha_{j} \Vert \leq \delta h\right\} K\left(\hat{d}_{i,j}/h\right)} - \mu_{i,t}^{(0)}\nonumber \\
&+ \frac{\dfrac{1}{h^{d_{\alpha}}N}\sum\limits_{j;w_{j,t}=0} \mathbf{1}\left\{ \Vert \alpha_{i}-\alpha_{j} \Vert \leq \delta h\right\} K\left(\hat{d}_{i,j}/h\right) \cdot u_{j,t}}{\dfrac{1}{h^{d_{\alpha}}N}\sum\limits_{j;w_{j,t}=0}K\left(\hat{d}_{i,j}/h\right)} \label{eq:secon_line}
\end{align}
The quantity on the first line of the RHS above is a weighted average
of $\mu_{j,t}^{(0)}$ over individuals $j$ for whom $\Vert\alpha_{i}-\alpha_{j}\Vert\leq\delta h$.
Using the Lipschitz condition in Assumption \ref{asm:4} \ref{asm:4-5}, we have
\begin{align*}
\left\vert\frac{\sum\limits_{j;w_{j,t}=0} \mathbf{1} \left\Vert\alpha_{i}-\alpha_{j}\Vert \leq \delta h\right\}K(\hat{d}_{i,j}/h) \cdot \mu_{j,t}^{(0)}}{\sum\limits_{j;w_{j,t}=0} \mathbf{1} \left\{\Vert \alpha_{i} - \alpha_{j} \Vert \leq \delta h\right\}K(\hat{d}_{i,j}/h)}
- \mu_{i,t}^{(0)} \right\vert
& \leq\sup_{\Vert\alpha-\alpha_{i}\Vert \leq \delta h}\left\vert \mu(\alpha,\lambda_{t})-\mu(\alpha_{i},\lambda_{t})\right\vert \\
 & \leq L_{0}\delta h
\end{align*}

We now consider the term in the second line (\ref{eq:secon_line}). First
consider the numerator. By applying union bound, we have
\begin{align*}
&\hspace{5pt} \mathbb{P}_{t}\left(\max_{i\in[N]} \left\vert \frac{1}{Nh}\sum\limits_{j;w_{j,t}=0} \mathbf{1} \left\{\Vert\alpha_{i}-\alpha_{j}\Vert \leq \delta h \right\}K\left(\hat{d}_{i,j}/h\right) \cdot u_{j,t} \right\vert > \eta \right)\\
\leq&\hspace{5pt} N \cdot \mathbb{P}_{t} \left( \left\vert \frac{1}{Nh}\sum\limits_{j;w_{j,t}=0} \mathbf{1} \left\{\Vert\alpha_{i}-\alpha_{j}\Vert \leq \delta h \right\}K\left(\hat{d}_{i,j}/h\right) \cdot u_{j,t} \right\vert > \eta \right)
\end{align*}
By Lemma \ref{lem:char_dist_bound}, there exist constants $0<\underline{c}<\bar{c}<\infty$ such that, with probability approaching one, for all $i$, $\underline{c}h^{d_{\alpha}}\leq\frac{1}{N}\sum\limits_{j;w_{j,t}=0}\mathbf{1}\left\{\Vert\alpha_{i}-\alpha_{j}\Vert\leq\delta h\right\}\leq\bar{c}h^{d_{\alpha}}$, it follows that
\begin{align*}
&\hspace{5pt} \mathbb{P}_{t}\left( \left\vert \frac{1}{Nh^{d_{\alpha}}}\sum\limits_{j;w_{j,t}=0} \mathbf{1}\left\{\Vert\alpha_{i}-\alpha_{j}\Vert\leq\delta h\right\}K\left(\hat{d}_{i,j}/h\right) \cdot u_{j,t} \right\vert > \eta \hspace{3pt}\big\vert\hspace{3pt} \left\{\alpha_{j},\hat{d}_{i,j},w_{j,t}\right\}_{j=1}^{N}, \lambda_{t} \right)\\
\leq&\hspace{5pt} \mathbb{P}_{t} \left(\left\vert \frac{1}{\sum_{j;w_{j,t}=0} \mathbf{1}\left\{\Vert\alpha_{i}-\alpha_{j}\Vert\leq\delta h\right\}} \sum\limits_{j;w_{j,t}=0, \Vert\alpha_{i}-\alpha_{j}\Vert\leq\delta h} K\left(\hat{d}_{i,j}/h\right) \cdot u_{j,t}\right\vert > \frac{1}{\bar{c}} \cdot \eta \hspace{3pt}\big\vert\hspace{3pt} \left\{\alpha_{j},\hat{d}_{i,j},w_{j,t}\right\}_{j=1}^{N},\lambda_{t} \right)
\end{align*}
First, note that, given $\left\{\alpha_{j},\hat{d}_{i,j},w_{j,t}\right\}_{j=1}^{N}$ and $\lambda_{t}$, the terms $\left\{K(\hat{d}_{i,j}/h)\right\}_{j=1}^{N}$ are constants, and these are bounded above by Assumption \ref{asm:4} \ref{asm:4-6}. By Assumptions \ref{asm:1} and \ref{asm:4} \ref{asm:4-3}, $\{u_{j,t}\}_{j=1}^{N}$ are sub-Gaussian, zero-mean, and independent random variables conditional on $\left\{\alpha_{j},\hat{d}_{i,j},w_{j,t}\right\}_{j=1}^{N}$ and $\lambda_{t}$. Thus, from the Bernstein inequality in Lemma \ref{lem:bi_subexp}, it follows
that for constants $C$, $a$, and $b$, for any $\eta>0$,
\begin{align*}
&\hspace{5pt} \mathbb{P}_{t}\left(\left\vert\frac{1}{\sum_{j;w_{j,t}=0}\mathbf{1}\left\{\Vert\alpha_{i}-\alpha_{j}\Vert\leq\delta h\right\}}\sum\limits_{j;w_{j,t}=0,\Vert\alpha_{i}-\alpha_{j}\Vert\leq\delta h}K\left(\hat{d}_{i,j}/h\right) \cdot u_{j,t}\right\vert > \frac{1}{\bar{c}} \cdot \eta \hspace{3pt}\big\vert\hspace{3pt} \left\{\alpha_{j},\hat{d}_{i,j},w_{j,t}\right\}_{j=1}^{N}, \lambda_{t} \right)\\
\leq&\hspace{5pt} C \cdot \exp\left(-\frac{\sum\limits_{j} \mathbf{1} \left\{\Vert\alpha_{i}-\alpha_{j}\Vert\leq\delta h\right\} \cdot \bar{c}^{-2}\eta^{2}}{a+b\eta}\right)\\
\leq&\hspace{5pt} C \cdot \exp\left(\log(N)-\frac{\underline{c}Nh^{d_{\alpha}}\bar{c}^{-2}\eta^{2}}{a+b\eta}\right)
\end{align*}
where the last line again uses that $\underline{c}h^{d_{\alpha}}\leq\frac{1}{N}\sum_{j;w_{j,t}=0} \mathbf{1}\left\{\Vert\alpha_{i}-\alpha_{j}\Vert\leq\delta h\right\}\leq\bar{c}h^{d_{\alpha}}$.
Then by the law of iterated expectations, we have
\begin{align*}
N \cdot\mathbb{P}_{t} \left(\left\vert\frac{1}{Nh^{d_{\alpha}}}\sum_{j;w_{j,t}=0}\mathbf{1}\left\{\Vert\alpha_{i}-\alpha_{j}\Vert\leq\delta h\right\}K\left(\hat{d}_{i,j}/h\right) \cdot u_{j,t} \right\vert > \eta\right)
\leq C \cdot \exp\left(\log(N)-\frac{\underline{c}Nh^{d_{\alpha}}\bar{c}^{-2}\eta^{2}}{a+b\eta}\right)
\end{align*}
By setting $\eta = \bar{c}\sqrt{\frac{a \log(N)}{\underline{c}h^{d_{\alpha}}N}}$ in the
above inequality, it gives that
\begin{align*}
&\hspace{5pt} N \cdot\mathbb{P}_{t} \left(\left\vert\frac{1}{Nh^{d_{\alpha}}}\sum_{j;w_{j,t}=0}\mathbf{1}\left\{\Vert\alpha_{i}-\alpha_{j}\Vert\leq\delta h\right\}K\left(\hat{d}_{i,j}/h\right) \cdot u_{j,t} \right\vert > \bar{c}\sqrt{\frac{a \log(N)}{\underline{c}h^{d_{\alpha}}N}} \right) \\
\leq&\hspace{5pt} C \cdot \exp\left(\log(N) - \frac{a \log(N)}{a+b\eta}\right)
\to C
\end{align*}
and leads to
\begin{align*}
&\hspace{5pt} \mathbb{P}_{t} \left(\max_{i\in[N]}\left\vert\frac{1}{Nh^{d_{\alpha}}}\sum_{j;w_{j,t}=0}\mathbf{1}\left\{\Vert\alpha_{i}-\alpha_{j}\Vert\leq\delta h\right\}K\left(\hat{d}_{i,j}/h\right) \cdot u_{j,t} \right\vert > \bar{c}\sqrt{\frac{a \log(N)}{\underline{c}h^{d_{\alpha}}N}} \right) \\
\leq&\hspace{5pt} C + o(1)
\end{align*}

Finally, we consider the denominator term $\frac{1}{Nh^{d_{\alpha}}}\sum\limits_{j;w_{j,t}=0} K\left(\hat{d}_{i,j}/h\right)$. By adding and subtracting terms, we have
\begin{align*}
&\hspace{5pt} \frac{1}{Nh^{d_{\alpha}}}\sum\limits_{j;w_{j,t}=0}\mathbf{1}\left\{\Vert\alpha_{i}-\alpha_{j}\Vert\leq\delta h\right\}K\left(\hat{d}_{i,j}/h\right) \\
=&\hspace{5pt} \frac{1}{Nh^{d_{\alpha}}}\sum\limits_{j;w_{j,t}=0}\mathbf{1}\left\{\Vert\alpha_{i}-\alpha_{j}\Vert\leq\delta h\right\}K\left(d_{i,j}/h\right) \\
+&\hspace{5pt} \frac{1}{Nh^{d_{\alpha}}}\sum\limits_{j;w_{j,t}=0}\mathbf{1}\left\{\Vert\alpha_{i}-\alpha_{j}\Vert\leq\delta h\right\}\left[K\left(\hat{d}_{i,j}/h\right) - K\left(d_{i,j}/h\right)\right]
\end{align*}
Using that the kernel is Lipschitz by Assumption \ref{asm:4} \ref{asm:4-6}, we have
\begin{align*}
&\hspace{5pt} \max_{i\in[N]} \left\vert \frac{1}{Nh^{d_{\alpha}}}\sum\limits_{j;w_{j,t}=0} \mathbf{1} \left\{\Vert\alpha_{i}-\alpha_{j}\Vert\leq\delta h\right\}\left[K\left(\hat{d}_{i,j}/h\right) - K\left(d_{i,j}/h\right) \right] \right\vert \\
\leq&\hspace{5pt} \bar{K}' \frac{\max_{j\neq i} \left\vert \hat{d}_{i,j}-d_{i,j}\right\vert}{h} \max_{i\in[N]} \frac{1}{Nh^{d_{\alpha}}} \sum\limits_{j;w_{j,t}=0} \mathbf{1} \left\{\Vert\alpha_{i}-\alpha_{j}\Vert\leq\delta h \right\}
\end{align*}
By Lemma \ref{lem:pseudo_dist_rate}, $\frac{\max\limits_{j\neq i} \left\vert\hat{d}_{i,j}-d_{i,j} \right\vert}{h} = o_{p}(1)$ and by Lemma \ref{lem:char_dist_bound} $\max\limits_{i\in[N]}\frac{1}{Nh^{d_{\alpha}}}\sum\limits_{j;w_{j,t}=0} \mathbf{1} \left\{\Vert\alpha_{i}-\alpha_{j}\Vert\leq\delta h \right\} = \mathcal{O}_{p}(1)$, it follows that
\begin{align*}
&\hspace{5pt} \min_{i\in[N]} \frac{1}{Nh^{d_{\alpha}}} \sum\limits_{j;w_{j,t}=0} \mathbf{1}\left\{\Vert\alpha_{i}-\alpha_{j}\Vert\leq\delta h\right\} K\left(\hat{d}_{i,j}/h\right) \\
=&\hspace{5pt} \min_{i\in[N]}\frac{1}{Nh^{d_{\alpha}}} \sum\limits_{j;w_{j,t}=0} \mathbf{1} \left\{\Vert\alpha_{i}-\alpha_{j}\Vert\leq\delta h\right\} K\left(d_{i,j}/h\right) + o_{p}(1)
\end{align*}
Finally, by Lemma \ref{lem:char_dist_bound}, there exist a $\tilde{c}>0$ such that, with probability
approaching one,
\begin{align*}
\min_{i\in[N]} \frac{1}{Nh^{d_{\alpha}}} \sum\limits_{j;w_{j,t}=0} \mathbf{1} \left\{\Vert\alpha_{i}-\alpha_{j}\Vert\leq\delta h \right\}K\left(d_{i,j}/h\right)
= \min_{i\in[N]} \frac{1}{Nh^{d_{\alpha}}} \sum\limits_{j;w_{j,t}=0} K\left(d_{i,j}/h\right) 
> \tilde{c}
\end{align*}
Therefore, we conclude that
\begin{align*}
\max_{i\in[N]} \left\vert \hat{\mu}_{i,t}^{(0)}-\mu_{i,t}^{(0)} \right\vert \
= \mathcal{O}_{p} \left( h + \sqrt{\frac{log(N)}{h^{d_{\alpha}}N}} \right)
\end{align*}
and finish the proof.
\end{proof}

\subsection{Proof of Theorem \ref{thm:2}}

\begin{proof}[Proof of Theorem \ref{thm:2}]
Recall the definition of the ATT estimate,
\begin{align*}
\hat{\mathrm{ATT}}_{t} & =\frac{1}{N_{1,t}}\sum_{i=1}^{N}\left(Y_{i,t}w_{i,t}-\frac{(1-w_{i,t})Y_{i,t}\hat{p}_{i,t}+(w_{i,t}-\hat{p}_{i,t})\hat{\mu}_{i,t}^{(0)}}{1-\hat{p}_{i,t}}\right)
\end{align*}
With some derivation, we can express this as
\begin{align*}
\hat{\mathrm{ATT}}_{t} & =\frac{1}{N_{1,t}}\sum_{i=1}^{N}\left(Y_{i,t}w_{i,t}-\frac{(1-w_{i,t})Y_{i,t}p_{i,t}+(w_{i,t}-p_{i,t})\mu_{i,t}^{(0)}}{1-p_{i,t}}\right)\\
 & +\frac{1}{N_{1,t}}\sum_{i=1}^{N}\frac{1}{1-\hat{p}_{i,t}}(p_{i,t}-\hat{p}_{i,t})(\mu_{i,t}^{(0)}-\hat{\mu}_{i,t}^{(0)})\\
 & +\frac{1}{N_{1,t}}\sum_{i=1}^{N}\epsilon_{i,t} \left(\frac{\mu_{i,t}^{(0)}-\hat{\mu}_{i,t}^{(0)}}{1-\hat{p}_{i,t}} \right)\\
 & +\frac{1}{N_{1,t}}\sum_{i=1}^{N}\frac{(1-w_{i,t})u_{i,t}}{(1-p_{i,t})}\left(\frac{p_{i,t}-\hat{p}_{i,t}}{1-\hat{p}_{i,t}}\right)
\end{align*}
We then tackle each line separately.

By Assumption \ref{asm:4} \ref{asm:4-2}, we have $\frac{1}{1-p_{i,t}}\geq c>0$ with probability $1$. And by supposition, we have $\max_{i\in[N]} \left\vert \hat{p}_{i,t}-p_{i,t} \right\vert = o_{p}(1)$, $\max_{i\in[N]}\left\vert \frac{1}{1-\hat{p}_{i,t}}-\frac{1}{1-p_{i,t}}\right\vert = o_{p}(1)$, and $\max_{i\in[N]} \left\vert \left(p_{i,t}-\hat{p}_{i,t}\right) \left(\mu_{i,t}^{(0)}-\hat{\mu}_{i,t}^{(0)}\right) \right\vert = o_{p}\left(N^{-1/2}\right)$. With Assumption \ref{asm:4} \ref{asm:4-2}, $\frac{N}{N_{1,t}}$ is bounded below uniformly with probability approaching $1$. Hence, it follows that
\begin{align*}
\frac{1}{N_{1,t}}\sum_{i=1}^{N}\frac{1}{1-\hat{p}_{i,t}} \left(p_{i,t}-\hat{p}_{i,t}\right) \left(\mu_{i,t}^{(0)}-\hat{\mu}_{i,t}^{(0)}\right) = o_{p}\left(N^{-1/2}\right)
\end{align*}

Define $\mathcal{D}_{k} := \left( \left\{ \left\{(Y_{j,t},w_{j,t}) \right\}_{t=1}^{T} \right\}_{j\in\mathcal{I}_{-k}},\left\{ \left\{ Y_{i,t} \right\}_{t=1}^{T_{0}-1} \right\}_{i\in\mathcal{I}_{k}}\right)$. By Assumptions \ref{asm:2}, \ref{asm:3}, and \ref{asm:4} \ref{asm:4-3}, we have
\begin{align*}
\epsilon_{i,t}\indep\mathcal{D}_{k} \hspace{3pt}\big\vert\hspace{3pt} \left\{ \alpha_{\ell} \right\}_{\ell=1}^{N}, \lambda_{t}
\end{align*}
This result implies 
\begin{align*}
\mathbb{E} \left[\epsilon_{i,t} \hspace{3pt}\big\vert\hspace{3pt} \mathcal{D}_{k}, \left\{\alpha_{\ell}\right\}_{\ell=1}^{N},\lambda_{t} \right] = 0
\end{align*}
Also, for any $i,j \in \mathcal{I}_{k}$, 
\begin{align*}
\epsilon_{i,t}\indep\epsilon_{i,t} \hspace{3pt}\big\vert\hspace{3pt} \{\alpha_{\ell}\}_{\ell=1}^{N},\lambda_{t},\mathcal{D}_{k}
\end{align*}
Thus, by the law of iterated expectations, and using the fact that $\hat{\mu}_{i,t}^{(0)}$ and $\hat{p}_{i,t}$ are non-stochastic given $\mathcal{D}_{k}$, we have
\begin{align*}
\mathbb{E} \left[\epsilon_{i,t}\left(\frac{\mu_{i,t}^{(0)}-\hat{\mu}_{i,t}^{(0)}}{1-\hat{p}_{i,t}}\right) \hspace{3pt}\bigg\vert\hspace{3pt} \mathcal{D}_{k},\{\alpha_{\ell}\}_{\ell=1}^{N},\lambda_{t}\right]
&= \mathbb{E} \left[\epsilon_{i,t} \hspace{3pt}\big\vert\hspace{3pt} \mathcal{D}_{k},\{\alpha_{\ell}\}_{\ell=1}^{N},\lambda_{t}\right] \left(\frac{\mu_{i,t}^{(0)}-\hat{\mu}_{i,t}^{(0)}}{1-\hat{p}_{i,t}}\right)\\
&= 0
\end{align*}
Moreover, because $\epsilon_{i,t}\in[-1,1]$ for all $i$, it follows that
\begin{align*}
\mathbb{E} \left[\epsilon_{i,t}\epsilon_{j,t} \left(\frac{\mu_{i,t}^{(0)}-\hat{\mu}_{i,t}^{(0)}}{1-\hat{p}_{i,t}}\right) \left(\frac{\mu_{j,t}^{(0)}-\hat{\mu}_{j,t}^{(0)}}{1-\hat{p}_{j,t}} \right) \hspace{3pt}\bigg\vert\hspace{3pt} \mathcal{D}_{k}, \{\alpha_{\ell}\}_{\ell=1}^{N}, \lambda_{t}\right]
&= \mathbb{E} \left[ \epsilon_{i,t}\epsilon_{j,t} \hspace{3pt}\big\vert\hspace{3pt} \mathcal{D}_{k},\{\alpha_{\ell}\}_{\ell=1}^{N},\lambda_{t} \right] \left(\frac{\mu_{i,t}^{(0)}-\hat{\mu}_{i,t}^{(0)}}{1-\hat{p}_{i,t}}\right) \left(\frac{\mu_{j,t}^{(0)}-\hat{\mu}_{j,t}^{(0)}}{1-\hat{p}_{j,t}}\right) \\
&\leq \mathbf{1} \left\{ j = i \right\} \cdot  \left(\frac{\mu_{i,t}^{(0)}-\hat{\mu}_{i,t}^{(0)}}{1-\hat{p}_{i,t}} \right)^{2}
\end{align*}
By Markov's inequality and with $\left\{w_{i,t}\right\}_{i \in \mathcal{I}_{k}}$, we have
\begin{align*}
\mathbb{P} \left(\left\vert \frac{1}{\sqrt{\left\vert \mathcal{I}_{k}\right\vert}}\sum_{i\in\mathcal{I}_{k}}\epsilon_{i,t} \left(\frac{\mu_{i,t}^{(0)}-\hat{\mu}_{i,t}^{(0)}}{1-\hat{p}_{i,t}}\right) \right\vert \leq \epsilon \hspace{3pt}\bigg\vert\hspace{3pt} \mathcal{D}_{k},\{\alpha_{i}\}_{i=1}^{N},\lambda_{t}\right)
\leq\frac{\frac{1}{\left\vert \mathcal{I}_{k}\right\vert} \sum_{i \in \mathcal{I}_{k}}\left(\frac{\mu_{i,t}^{(0)}-\hat{\mu}_{i,t}^{(0)}}{1-\hat{p}_{i,t}}\right)^{2}}{\epsilon^{2}}=o_{p}(1)
\end{align*}
Hence, by using $\vert \mathcal{I}_{k}\vert \asymp N$, it follows that
\begin{align*}
\frac{1}{|\mathcal{I}_{k}|}\sum_{i\in\mathcal{I}_{k}}\epsilon_{i,t} \left(\frac{\mu_{i,t}^{(0)}-\hat{\mu}_{i,t}^{(0)}}{1-\hat{p}_{i,t}} \right) = o_{p}\left(N^{-1/2}\right)
\end{align*}

Similarly, by Assumptions \ref{asm:2}, \ref{asm:3}, and \ref{asm:4} \ref{asm:4-3}, we have
\begin{align*}
u_{i,t} \indep \mathcal{D}_{k} \hspace{3pt}\big\vert\hspace{3pt} \{\alpha_{k}\}_{k=1}^{N},\lambda_{t},w_{i,t}=0
\end{align*}
Also, for any $i,j \in \mathcal{I}_{k}$, we have
\begin{align*}
u_{i,t}\indep u_{i,j} \hspace{3pt}\big\vert\hspace{3pt} \{\alpha_{k}\}_{k=1}^{N},\lambda_{t}, \mathcal{D}_{k}, w_{i,t}=w_{j,t}=0
\end{align*}
Thus, by the law of iterated expectations, alongside the fact that $\hat{p}_{i,t}$ is non-stochastic given $\mathcal{D}_{k}$, it follows that
\begin{align*}
\mathbb{E} \left[\frac{u_{i,t}}{(1-p_{i,t})} \left(\frac{p_{i,t}-\hat{p}_{i,t}}{1-\hat{p}_{i,t}}\right) \hspace{3pt}\bigg\vert\hspace{3pt} \mathcal{D}_{k},\{\alpha_{k}\}_{k=1}^{N},\lambda_{t},w_{i,t}=0\right]
&= \mathbb{E} \left[u_{i,t} \hspace{3pt}\big\vert\hspace{3pt} \{\alpha_{k}\}_{k=1}^{N},\lambda_{t},w_{i,t}=0 \right] \frac{1}{1-p_{i,t}}\left(\frac{p_{i,t}-\hat{p}_{i,t}}{1-\hat{p}_{i,t}}\right) \\
&= 0
\end{align*}
Moreover, we have
\begin{align*}
&\hspace{5pt} \mathbb{E} \left[ \frac{u_{i,t}}{(1-p_{i,t})}\frac{u_{j,t}}{(1-p_{j,t})} \left( \frac{p_{i,t}-\hat{p}_{i,t}}{1-\hat{p}_{i,t}} \right) \left( \frac{p_{j,t}-\hat{p}_{j,t}}{1-\hat{p}_{j,t}} \right) \hspace{3pt}\bigg\vert\hspace{3pt} \mathcal{D}_{k},\{\alpha_{k}\}_{k=1}^{N},\lambda_{t},w_{i,t}=w_{j,t}=0 \right] \\
=&\hspace{5pt} \mathbb{E} \left[ \frac{u_{i,t}}{(1-p_{i,t})}\frac{u_{j,t}}{(1-p_{j,t})} \hspace{3pt}\bigg\vert\hspace{3pt} \mathcal{D}_{k}, \{\alpha_{k}\}_{k=1}^{N}, \lambda_{t}, w_{i,t}=w_{j,t}=0 \right] \left( \frac{p_{i,t}-\hat{p}_{i,t}}{1-\hat{p}_{i,t}} \right) \left( \frac{p_{j,t}-\hat{p}_{j,t}}{1-\hat{p}_{j,t}} \right) \\
\leq&\hspace{5pt} \mathbf{1}\{j=i\} \cdot \mathbb{E} \left[u_{i,t}^{2} \hspace{3pt}\big\vert\hspace{3pt} \{\alpha_{k}\}_{k=1}^{N},\lambda_{t} \right] \frac{1}{(1-p_{i,t})^{2}} \left(\frac{p_{i,t}-\hat{p}_{i,t}}{1-\hat{p}_{i,t}}\right)^{2}\\
\leq&\hspace{5pt} \mathbf{1}\{j=i\} \cdot \bar{\sigma}_{Y}^{2} \cdot \frac{1}{(1-p_{i,t})^{2}} \left(\frac{p_{i,t}-\hat{p}_{i,t}}{1-\hat{p}_{i,t}} \right)^{2}
\end{align*}
for some finite constant $\bar{\sigma}_{Y}^{2}$, which exists by Assumption \ref{asm:4} \ref{asm:4-3}. Thus, by Markov's inequality with $\{w_{i,t}\}_{i\in\mathcal{I}_{k}}$, it follows that
\begin{align*}
\mathbb{P} \left( \left\vert \frac{1}{\sqrt{\vert\mathcal{I}_{k}\vert}}\sum_{i\in\mathcal{I}_{k}}(1-w_{i,t})\frac{u_{i,t}}{1-p_{i,t}}\left(\frac{p_{i,t}-\hat{p}_{i,t}}{1-\hat{p}_{i,t}}\right) \right\vert \leq \epsilon \hspace{3pt}\bigg\vert\hspace{3pt} \mathcal{D}_{k}, \{\alpha_{i}\}_{i=1}^{}, \lambda_{t} \right)
\leq\frac{\frac{1}{\vert\mathcal{I}_{k}\vert} \sum_{i\in\mathcal{I}_{k}}\bar{\sigma}_{Y}^{2}\frac{1}{1-p_{i,t}}\left(\frac{p_{i,t}-\hat{p}_{i,t}}{1-\hat{p}_{i,t}}\right)^{2}}{\epsilon^{2}}
= o_{p}(1)
\end{align*}
Hence, we have
\begin{align*}
\frac{1}{\vert\mathcal{I}_{k}\vert}\sum_{i\in\mathcal{I}_{k}} (1-w_{i,t}) \frac{u_{i,t}}{(1-p_{i,t})} \left( \frac{p_{i,t}-\hat{p}_{i,t}}{1-\hat{p}_{i,t}} \right)
= o_{p}\left(N^{-1/2}\right)
\end{align*}

Therefore, by combining all previous results, we have shown that
\begin{align*}
\sqrt{N}\hat{\mathrm{ATT}}_{t}
= \frac{\sqrt{N}}{N_{1,t}} \sum_{i=1}^{N} \left(Y_{i,t}w_{i,t}-\frac{(1-w_{i,t})Y_{i,t}p_{i,t}+(w_{i,t}-p_{i,t})\mu_{i,t}^{(0)}}{1-p_{i,t}} \right) + o_{p}(1)
\end{align*}
and the standard asymptotic theory for $\mathrm{ATT}$ can be further applied.
\end{proof}

\subsection{Bernstein Inequalities}

In this section, we specify the Bernstein inequalities in Lemma \ref{lem:bi} to \ref{lem:bi_subexp} that are extensively used throughout the proofs.

\begin{lemma}{A.5}[Bernstein inequality; see \cite{bennett1962probability}] \label{lem:bi}
Let $Z_1, ..., Z_n$ be mean zero independent random variables. Assume there exists a positive constant $M$ such that $|Z_i| \leq M$ with probability one for each $i$. Also let $\sigma^2 := \frac{1}{n} \sum_{i=1}^n \mathbb{E}[Z_i^2]$. Then for all $\epsilon > 0$
\begin{align*}
\mathbb{P} \left( \left\vert \frac{1}{n} \sum_{i=1}^n Z_i \right\vert \geq \epsilon \right) \leq 2 \exp \left( - \frac{n \epsilon^2}{2(\sigma^2 + \frac{1}{3} M \epsilon)} \right).
\end{align*}
\end{lemma}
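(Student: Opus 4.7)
The plan is to carry out the standard Chernoff--Cram\'er exponential moment argument. The starting point is the exponential Markov inequality: for any $\lambda > 0$,
\[
\mathbb{P}\!\left(\tfrac{1}{n}\sum_{i=1}^{n} Z_{i} \geq \epsilon\right) \leq e^{-\lambda n \epsilon} \, \mathbb{E}\!\left[e^{\lambda \sum_{i} Z_{i}}\right] = e^{-\lambda n \epsilon} \prod_{i=1}^{n} \mathbb{E}[e^{\lambda Z_{i}}],
\]
where independence of the $Z_{i}$'s lets us factorize the moment generating function of the sum. The bulk of the work is then to control each individual MGF $\mathbb{E}[e^{\lambda Z_{i}}]$.

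For the per-summand bound, I would Taylor-expand $e^{\lambda Z_{i}}$, use $\mathbb{E}[Z_{i}] = 0$ to kill the linear term, and exploit $|Z_{i}| \leq M$ to bound higher moments as $|\mathbb{E}[Z_{i}^{k}]| \leq M^{k-2}\mathbb{E}[Z_{i}^{2}]$ for $k \geq 2$. Summing the tail of the resulting series using the elementary inequality $k!/2 \geq 3^{k-2}$ (verified termwise) gives the familiar Bennett-style estimate
\[
\mathbb{E}[e^{\lambda Z_{i}}] \leq \exp\!\left(\frac{\lambda^{2} \mathbb{E}[Z_{i}^{2}]}{2(1 - \lambda M/3)}\right), \qquad 0 < \lambda < 3/M.
\]
Combining with the Chernoff step, the log-probability is bounded by $-\lambda n \epsilon + n\lambda^{2} \sigma^{2}/(2(1 - \lambda M/3))$.

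The final step is to choose a (near-)optimal $\lambda$. Setting $\lambda = \epsilon/(\sigma^{2} + M\epsilon/3)$, which automatically lies in $(0, 3/M)$, and simplifying yields exactly the one-sided tail bound $\exp(-n\epsilon^{2}/[2(\sigma^{2} + M\epsilon/3)])$. Applying the identical argument to $-Z_{i}$ and taking a union bound introduces the factor of $2$ and produces the stated two-sided inequality.

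The main obstacle is essentially bookkeeping: none of the steps involve deep ideas, but the chain of inequalities must be arranged precisely so that the constants $1/2$ and $1/3$ appear in the correct places in the exponent. Since this is a classical and very widely used inequality, the cleanest route in the write-up would be to defer the detailed calculations to \citet{bennett1962probability} (or to any standard reference on concentration inequalities such as Boucheron--Lugosi--Massart), and to verify only that the statement matches the form used in the subsequent proofs of Lemmas~\ref{lem:pseudo_dist_rate}--\ref{lem:char_dist_bound}.
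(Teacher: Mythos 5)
Your sketch is the standard Chernoff–Cram\'er derivation of Bernstein's inequality, and the calculations check out: the elementary bound $k!/2 \ge 3^{k-2}$ gives the geometric-series control on the MGF, and the choice $\lambda = \epsilon/(\sigma^2 + M\epsilon/3)$ collapses the exponent to exactly $-n\epsilon^2/\bigl(2(\sigma^2 + M\epsilon/3)\bigr)$. The paper itself gives no proof of Lemma~\ref{lem:bi} and simply cites \citet{bennett1962probability}, which is precisely the route you recommend in your final paragraph, so the two treatments agree.
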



\begin{lemma}{A.6}[Bernstein inequality for unbounded random variables; see \cite{boucheron2013concentrationinequalities}] \label{lem:bi_unbound}
Let $Z_1, ..., Z_n$ be independent random variables. Assume that there exist some positive constants $\nu$ anc $c$ such that $\frac{1}{n}\sum_{i=1}^{n}\mathbb{E}[Z_i^2]\leq\nu$ such that for all integers $q \geq 3$,
\begin{align*}
\frac{1}{n}\sum_{i=1}^{n}\mathbb{E}[|Z_i|^q] \leq \frac{q! c^{q-2}}{2}\nu.
\end{align*}
Then for all $\epsilon > 0$,
\begin{align*}
\mathbb{P} \left( \left\vert \frac{1}{n} \sum_{i=1}^n (Z_i - \mathbb{E}[Z_i]) \right\vert \geq \epsilon \right) \leq 2 \exp \left( - \frac{n \epsilon^2}{2(\nu + c \epsilon)} \right).
\end{align*}
\end{lemma}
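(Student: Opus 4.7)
The plan is to prove this via the standard Chernoff/Cramér-Chernoff method applied to the moment generating function, then optimize the free exponential parameter. Write $\tilde Z_i := Z_i - \mathbb{E}[Z_i]$ and $S_n := \sum_{i=1}^n \tilde Z_i$, so the goal becomes a tail bound for $|S_n|/n$.

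First, I would reduce to a one-sided bound: apply Markov's inequality to $e^{\lambda S_n}$ for $\lambda > 0$,
\[
\mathbb{P}(S_n \geq n\epsilon) \leq e^{-\lambda n \epsilon}\prod_{i=1}^n \mathbb{E}\bigl[e^{\lambda \tilde Z_i}\bigr],
\]
using independence of the $\tilde Z_i$. The analogous inequality for $-S_n$ then delivers the factor of $2$ after a union bound.

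Next, I would control each log-MGF by Taylor-expanding around zero,
\[
\log \mathbb{E}[e^{\lambda \tilde Z_i}] \leq \mathbb{E}[e^{\lambda \tilde Z_i}] - 1 = \sum_{q \geq 2} \frac{\lambda^q}{q!}\mathbb{E}[\tilde Z_i^q],
\]
since $\mathbb{E}[\tilde Z_i] = 0$ kills the linear term. To translate the hypothesis (which is phrased for $|Z_i|^q$) into a bound on $\mathbb{E}[\tilde Z_i^q]$, I would apply the inequality $|\tilde Z_i|^q \leq 2^{q-1}(|Z_i|^q + |\mathbb{E}Z_i|^q)$ and Jensen's inequality, absorbing the resulting constants into a redefined $c$; alternatively one can work directly with $|Z_i|^q$ by invoking the fact that for the Chernoff argument it suffices to bound $\mathbb{E}[e^{\lambda \tilde Z_i}]$ by $e^{\lambda \mathbb{E}|Z_i|}\mathbb{E}[e^{\lambda |Z_i|}]$. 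Either route yields, after summing the geometric series $\sum_{q \geq 2}(c\lambda)^{q-2}/2 = 1/(2(1-c\lambda))$ valid for $\lambda \in (0,1/c)$,
\[
\sum_{i=1}^n \log \mathbb{E}[e^{\lambda \tilde Z_i}] \;\leq\; \frac{n\lambda^2 \nu}{2(1-c\lambda)}.
\]

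Combining the two displays gives
\[
\mathbb{P}(S_n \geq n\epsilon) \;\leq\; \exp\Bigl(-n\lambda\epsilon + \tfrac{n\lambda^2 \nu}{2(1-c\lambda)}\Bigr),
\]
and the final step is the Bernstein optimization: choose $\lambda = \epsilon/(\nu + c\epsilon) \in (0,1/c)$, which yields the exponent $-n\epsilon^2/(2(\nu + c\epsilon))$. Doubling for the two-sided bound completes the proof. The only mildly delicate step is the centering — converting the hypothesized absolute-moment bound on $Z_i$ into the corresponding bound on $\tilde Z_i$ needed for the Taylor series — but this is a routine application of Jensen plus the elementary inequality $(a+b)^q \leq 2^{q-1}(a^q + b^q)$, with the redefinition of $c$ absorbed into the constant; no new ideas beyond the classical Cramér-Chernoff template are required.
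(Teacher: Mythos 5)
The paper does not actually give a proof of this lemma; it simply cites Theorem~2.1 of \citet{boucheron2013concentrationinequalities}. Your general template --- Cram\'er--Chernoff on $e^{\lambda S_n}$, Taylor-expand the log-MGF, sum the geometric series using the moment hypothesis, then choose $\lambda = \epsilon/(\nu + c\epsilon)$ --- is exactly the standard route that the cited reference follows, and your final optimization is correct.

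However, the step you flag as ``mildly delicate'' contains a genuine gap, and the hand-wave of ``absorbing the constant'' does not close it. You expand $\log \mathbb{E}[e^{\lambda \tilde Z_i}] \leq \mathbb{E}[e^{\lambda \tilde Z_i}] - 1 = \sum_{q\geq 2}\frac{\lambda^q}{q!}\mathbb{E}[\tilde Z_i^q]$, which requires moments of the \emph{centered} variable, and you then pass from $\mathbb{E}[|\tilde Z_i|^q]$ back to the hypothesized $\mathbb{E}[|Z_i|^q]$ via $|\tilde Z_i|^q \leq 2^{q-1}(|Z_i|^q + |\mathbb{E}Z_i|^q)$ and Jensen. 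This yields $\mathbb{E}[|\tilde Z_i|^q] \leq 2^{q}\,\mathbb{E}[|Z_i|^q]$, hence $\tfrac{1}{n}\sum_i\mathbb{E}[|\tilde Z_i|^q] \leq \tfrac{q!\,(2c)^{q-2}}{2}\,(4\nu)$. But $c$ and $\nu$ appear explicitly in the conclusion, so the argument delivers $2\exp\bigl(-n\epsilon^2/(2(4\nu + 2c\epsilon))\bigr)$, a strictly weaker bound than what the lemma asserts. Your alternative route (bounding $\mathbb{E}[e^{\lambda\tilde Z_i}]$ by $e^{\lambda\mathbb{E}|Z_i|}\mathbb{E}[e^{\lambda|Z_i|}]$) is worse: it destroys the centering at linear order in $\lambda$, leaving a term $2\lambda\mathbb{E}|Z_i|$ in the exponent that a Bernstein-style optimization cannot remove.

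The fix is to keep the centering \emph{outside} the Taylor expansion rather than inside it. Write
\begin{align*}
\log\mathbb{E}\bigl[e^{\lambda\tilde Z_i}\bigr] = -\lambda\mathbb{E}[Z_i] + \log\mathbb{E}\bigl[e^{\lambda Z_i}\bigr] \leq -\lambda\mathbb{E}[Z_i] + \mathbb{E}\bigl[e^{\lambda Z_i}\bigr] - 1 = \sum_{q\geq 2}\frac{\lambda^q}{q!}\,\mathbb{E}[Z_i^q] \leq \sum_{q\geq 2}\frac{\lambda^q}{q!}\,\mathbb{E}[|Z_i|^q],
\end{align*}
using $\log x \leq x - 1$ applied to $\mathbb{E}[e^{\lambda Z_i}]$. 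The linear term in the Taylor series of $\mathbb{E}[e^{\lambda Z_i}]-1$ is $\lambda\mathbb{E}[Z_i]$, which cancels the $-\lambda\mathbb{E}[Z_i]$ exactly; no $2^{q-1}$ inequality, no Jensen, and no redefinition of $c$ is needed. Averaging over $i$ and invoking both hypotheses (the $q=2$ one and the $q\geq3$ one) gives $\tfrac{1}{n}\sum_i\log\mathbb{E}[e^{\lambda\tilde Z_i}] \leq \tfrac{\nu\lambda^2}{2(1-c\lambda)}$ for $\lambda\in(0,1/c)$, and your closing steps --- the choice $\lambda = \epsilon/(\nu + c\epsilon)$ and the two-sided union bound --- then produce the lemma's constants verbatim.
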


\begin{proof}[Proof of Lemma \ref{lem:bi_unbound}]
See Theorem 2.1 in \cite{boucheron2013concentrationinequalities} for detailed discussion.
\end{proof}


\begin{lemma}{A.7}[Bernstein inequality for sub-exponential random variables; see \cite{zeleneev2020identification}]  \label{lem:bi_subexp}
Let $Z_1, ..., Z_n$ be mean zero independent random variables. Assume that there exist some $\nu > 0$ such that $\mathbb{E}[\exp(\lambda Z_i)] \leq \exp(\nu \lambda^2)$ for all $\lambda \in \mathbb{R}$. Then, there exist some positive constants $C$, $a$, and $b$ such that for all constants $\alpha_1, ..., \alpha_n$ satisfying $\max_i |\alpha_i | < \bar{\alpha}$ and for all $\epsilon > 0$,
\begin{align*}
\mathbb{P} \left( \left\vert \frac{1}{n} \sum_{i=1}^n \alpha_i Z_i \right\vert \geq \epsilon \right) \leq C \exp \left( - \frac{n \epsilon^2}{a + b \epsilon} \right)
\end{align*}
and
\begin{align*}
\mathbb{P} \left( \left\vert \frac{1}{n} \sum_{i=1}^n \alpha_i (Z_i^2 - \mathbb{E}[Z_i^2]) \right\vert \geq \epsilon \right) \leq C \exp \left( - \frac{n \epsilon^2}{a + b \epsilon} \right).
\end{align*}
\end{lemma}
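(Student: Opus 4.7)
The plan is to derive both inequalities via Chernoff-style arguments, treating the linear and quadratic parts separately, because the linear sums inherit sub-Gaussian tails directly from the MGF assumption while the quadratic sums only inherit sub-exponential tails.

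For the linear inequality, I would first observe that $|\alpha_i| < \bar{\alpha}$ together with the given MGF bound yields $\mathbb{E}[\exp(\lambda \alpha_i Z_i)] \leq \exp(\nu \bar{\alpha}^2 \lambda^2)$ for all $\lambda \in \mathbb{R}$. Independence then gives $\mathbb{E}[\exp(\lambda \sum_i \alpha_i Z_i)] \leq \exp(n \nu \bar{\alpha}^2 \lambda^2)$. A standard Markov/Chernoff bound optimized at $\lambda^{\ast} = \epsilon/(2 \nu \bar{\alpha}^2)$, combined with the analogous lower-tail bound, yields
$\mathbb{P}\big(|\tfrac{1}{n}\sum_i \alpha_i Z_i| \geq \epsilon\big) \leq 2 \exp\big(- n \epsilon^2 / (4\nu \bar{\alpha}^2)\big)$, which is the stated form with $C=2$, $a = 4\nu \bar{\alpha}^2$, and $b = 0$.

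For the quadratic inequality, the plan is to verify the hypothesis of Lemma \ref{lem:bi_unbound} for the variables $W_i := \alpha_i (Z_i^2 - \mathbb{E}[Z_i^2])$ and then invoke that lemma. The key intermediate step is to turn the sub-Gaussian MGF bound on $Z_i$ into Bernstein-type moment control on $Z_i^2$. Specifically, the MGF assumption implies the tail bound $\mathbb{P}(|Z_i| \geq s) \leq 2 \exp(-s^2/(4\nu))$ by the usual Chernoff argument, which via the layer-cake identity gives $\mathbb{E}[|Z_i|^{2q}] \leq q!\, (c_1 \nu)^q$ for an absolute constant $c_1$. A binomial expansion of $(Z_i^2 - \mathbb{E}[Z_i^2])^q$ together with $|\alpha_i| < \bar{\alpha}$ then yields $\mathbb{E}[|W_i|^q] \leq \tfrac{q!}{2}\, c^{q-2} v$ for constants $c,v$ depending only on $\nu$ and $\bar{\alpha}$. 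Applying Lemma \ref{lem:bi_unbound} delivers exactly the stated $C \exp(-n\epsilon^2/(a + b\epsilon))$ form, with an extra factor of two absorbed into $C$ to handle the two-sided event.

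The main obstacle will be the quadratic part, and in particular the bookkeeping needed to obtain Bernstein moment constants that are uniform in $i$ and do not depend on the specific weights $\alpha_i$ beyond the bound $\bar{\alpha}$. Once this uniformity is secured, both inequalities follow by routine Chernoff/Markov manipulations, and the result can be stated with common constants $C, a, b$ by taking the worse of the constants from the two arguments.
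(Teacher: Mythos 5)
Your proposal is mathematically sound, but note that the paper itself does not prove Lemma A.7 --- its ``proof'' simply cites Corollary A.1 in \cite{zeleneev2020identification}. So there is no internal argument to compare against; you are essentially supplying the proof that the paper delegates to the reference. Your route is the natural one: the linear inequality follows directly from the sub-Gaussian MGF (giving a bound with $b=0$, which is fine since $\exp(-n\epsilon^2/a)\le\exp(-n\epsilon^2/(a+b\epsilon))$ for any $b>0$, so common constants can be chosen after the fact), and the quadratic inequality follows by checking the Bernstein moment condition for $W_i=\alpha_i(Z_i^2-\mathbb{E}[Z_i^2])$ and invoking Lemma \ref{lem:bi_unbound}. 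One small bookkeeping remark: Lemma \ref{lem:bi_unbound} already delivers a two-sided bound with the leading factor $2$ built in (and handles the centering $Z_i-\mathbb{E}[Z_i]$, which for your mean-zero $W_i$ is vacuous), so the ``extra factor of two absorbed into $C$'' you mention for the quadratic part is unnecessary. Also confirm explicitly that your moment bound covers the $q=2$ case of the hypothesis (the variance condition $\frac{1}{n}\sum_i\mathbb{E}[W_i^2]\le\nu$), not only $q\ge 3$; this holds from $\mathbb{E}[Z_i^2]\le 2\nu$ and $\mathbb{E}[Z_i^4]\lesssim\nu^2$, both consequences of the MGF assumption, and is worth stating rather than leaving implicit.
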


\begin{proof}[Proof of Lemma \ref{lem:bi_subexp}]
See Corollary A.1 in \cite{zeleneev2020identification} for detailed discussion.
\end{proof}

\end{document}